\newcommand\eq[1] {(\ref{#1})}
\newcommand\fig[1] {\ref{fig:#1}}
\newcommand\labfig[1] {\label{fig:#1}}
\newtheorem{theorem}{Theorem}[section]
\newtheorem{corollary}{Corollary}[section]
\newcommand{\bfm}[1]{\mbox{\boldmath ${#1}$}}
\newcommand{\nonum}{\nonumber \\}
\newcommand{\beqa}{\begin{eqnarray}}
\newcommand{\eeqa}[1]{\label{#1}\end{eqnarray}}
\newcommand{\beq}{\begin{equation}}
\newcommand{\eeq}[1]{\label{#1}\end{equation}}
\newcommand{\bpm}{\begin{pmatrix}}
	\newcommand{\epm}{\end{pmatrix}}
\newcommand{\Grad}{\nabla}
\newcommand{\Div}{\nabla \cdot}
\newcommand{\ssigma}{{\mathfrak S}}
\newcommand{\Gd}{\delta}
\newcommand{\Gf}{\phi}
\newcommand{\Gl}{\lambda}
\newcommand{\BGf}{\bfm\phi}
\newcommand{\CA}{{\cal A}}
\newcommand{\CB}{{\cal B}}
\newcommand{\CC}{{\cal C}}
\newcommand{\CF}{{\cal F}}
\newcommand{\CG}{{\cal G}}
\newcommand{\CM}{{\cal M}}
\newcommand{\CR}{{\cal R}}
\def\Ba{{\bf a}}
\def\Bb{{\bf b}}
\def\Be{{\bf e}}
\def\Bf{{\bf f}}
\def\Bm{{\bf m}}
\def\Bn{{\bf n}}
\def\Bu{{\bf u}}
\def\Bv{{\bf v}}
\def\Bw{{\bf w}}
\def\Bx{{\bf x}}
\def\By{{\bf y}}
\def\BA{{\bf A}}
\def\BF{{\bf F}}
\def\BG{{\bf G}}
\def\BI{{\bf I}}
\def\BR{{\bf R}}
\def\BU{{\bf U}}
\def\BV{{\bf V}}
\def\BW{{\bf W}}
\def\BX{{\bf X}}
\def\B0{{\bf 0}}
\def \ba {\begin{array}}
	\def \ea {\end{array}}
\def\restrict#1{\raise-.5ex\hbox{\ensuremath|}_{#1}}
\newtheorem {Thm} {Theorem} [section]
\newtheorem {Adef} [Thm] {Definition}
\newtheorem {Arem} [Thm] {Remark}
\newtheorem {Aexa} [Thm] {Example}
\newtheorem {Anot} [Thm] {Notation}
\def \refe #1.{(\ref{#1})}
\def \reff #1.{figure~\ref{#1}}
\def \refs #1.{section~\ref{#1}}
\def \refss #1.{subsection~\ref{#1}}
\def \refD #1.{Definition~\ref{#1}}
\def \refT #1.{Theorem~\ref{#1}}
\def \refL #1.{Lemma~\ref{#1}}
\def \refC #1.{Corollary~\ref{#1}}
\def \refP #1.{Proposition~\ref{#1}}
\def \refR #1.{Remark~\ref{#1}}
\def \refE #1.{Example~\ref{#1}}
\def \refN #1.{Notation~\ref{#1}}
\newtheorem{lemma}{Lemma}
\def\R{{\mathbb R}}
\def\e{{\varepsilon}}
\newcommand{\detail}[1] {}
\def\divergence{\mathop{\operator@font div}\nolimits}
\title{{On the forces that cable webs under tension can support and how to design cable webs to channel stresses}}
\author{Guy Bouchitt\'e\thanks{Universit\'e de Toulon, e-mail: bouchitte@univ-tln.fr}\;\;, \;
	Ornella Mattei\thanks{Department of Mathematics, University of Utah, e-mail: mattei@math.utah.edu}\;\;, \;
	Graeme W. Milton\thanks{Department of Mathematics, University of Utah, e-mail: milton@math.utah.edu}\;\;, \; and
	Pierre Seppecher\thanks{Universit\'e de Toulon, e-mail: seppecher@imath.fr}}
\begin{document}
	
	\maketitle
	\begin{abstract}
		\noindent
		In many applications of Structural Engineering the following question arises: given a set of forces $\mathbf{f}_1,\mathbf{f}_2,\dots,\mathbf{f}_N$ applied at prescribed points
		$\mathbf{x}_1,\mathbf{x}_2,\dots,\mathbf{x}_N$, under what constraints on the forces does there exist a truss structure (or wire web) with all
		elements under tension that supports these forces? Here we provide answer to such a question for any configuration of the terminal points $\mathbf{x}_1,\mathbf{x}_2,\dots,\mathbf{x}_N$ in the two- and three-dimensional case. Specifically, the existence of a
		web is guaranteed by a necessary and sufficient condition on the loading which corresponds to a finite dimensional linear programming problem. In two-dimensions we show that any such web can be replaced by one in which there are at most $P$ elementary loops, where elementary means
		that the loop cannot be subdivided into subloops, and where $P$ is the number of forces $\mathbf{f}_1,\mathbf{f}_2,\dots,\mathbf{f}_N$ applied at points strictly within the convex hull of 
		$\mathbf{x}_1,\mathbf{x}_2,\dots,\mathbf{x}_N$. In three-dimensions we show that, by slightly perturbing
		$\mathbf{f}_1,\mathbf{f}_2,\dots,\mathbf{f}_N$, there exists a uniloadable web supporting this loading. Uniloadable means it supports this loading
		and all positive multiples of it, but not any other loading. Uniloadable webs provide a mechanism for distributing stress in desired ways.
	\end{abstract}
	
	
	\section{Introduction}\label{Introduction}
	
	One of the main goals of Structural Engineering is to find performing structures when one incorporates into the design a specific type of material or substructure. Many materials behave quite differently under tension or compression: concrete and masonry structures are two examples of materials that perform much better under compression. Some types of structures support also only specific loadings: a wire or cable, for example, can only support tension and not compression. Here we are interested in the case where one incorporates a material that works particularly well under tension so that a cable web is expected to be representative of the most performing structure to be used.
	Thus, we address the following problem: given a set of forces $\Bf_1,\Bf_2,\ldots,\Bf_N$ applied at prescribed points
	$\Bx_1,\Bx_2, \ldots, \Bx_N$, under what constraints on the forces does there exist a wire web (or truss structure) with all elements under tension that supports these forces? Note that the problem is identical if one is interested in the case where all elements are under compression.
	We are only interested in cable wires which can be modeled as a set of straight truss elements: we do not consider the case of catenary elements, see, e.g., \cite{Ahmad:2013:NLA,Ahmadizadeh:2013:TDG,Andreu:2006:AND,Such:2009:AAB}.

In the two-dimensional case, a complete answer to this
problem is given by Theorem 1 in \cite{Milton:2017:SFI} in the special case where the prescribed points are vertices of a convex polygon.
This theorem states that, in this case, a web exists if and only if the net torque going clockwise around any connected portion of the boundary is positive: for any sequence $(\Bx_i, \Bx_{i+1},\dots, \Bx_j)$ of consecutive vertices ordered clockwise (where ${\bf x}_k$ is identified with ${\bf x}_{k-N}$) we have:
\beq  \sum_{k=i}^{j} \det(\Bx_k-\Bx_i, \Bf_k)\geq 0.\eeq{I1}
Furthermore, by using the Airy stress functions theory, a representative web is explicitly given that contains no closed loops (that is, there is no set of wires forming the boundary of a polygon).
The reader is referred to \cite{Milton:2017:SFI} for details.

So the question now is: what happens when the points are not the vertices of a convex polygon or what happens in the three-dimensional case? Theorem 1.1 (see below), which is one of the main results of this paper, answers this question completely. To make the statement clear, let us first introduce some terminology that will be used throughout the paper:\\
\\
\noindent
$\bullet$ {\it Finite web}: a collection $W:=\big([\Bx_i,\Bx_j],[\Bx_k,\Bx_l]\dots \big)$ of segments (or bars) where $\Bx_1,\dots, \Bx_M$ are a finite set of points called {\it nodes}.\\
\noindent
$\bullet$ {\it Terminal nodes}: the nodes $\BX=(\Bx_1,\dots, \Bx_N)$, $N\leq M$, where the forces are applied.\\
\noindent
$\bullet$ {\it Internal nodes}: the remaining nodes, if any. \\
\noindent
$\bullet$ {\it Admissible web stress state}: when each bar $[\Bx_i,\Bx_j]$ of the web is endowed with a non negative tension $\sigma_{ij}$, we say that $\sigma=(\sigma_{ij},\sigma_{kl},\dots ) $ is an admissible web stress state on $W$ for the  loading $\BF$ at $\BX$ if it is in equilibrium under the action of the forces $\BF=(\Bf_1,\Bf_2,\ldots,\Bf_N)$ applied at $\BX=(\Bx_1,\Bx_2,\ldots,\Bx_N)$.
 If there exists such an admissible stress state then the web $W$ is said to {\it support}\footnote{By ``admissible stress state'' we mean an equilibrium state in which all bars are either in tension, or carrying no load. Thus, by ``supporting'', we mean supporting with all bars either in tension, or carrying no load.}  $\BF$ at $\BX$.\\
 \noindent
 $\bullet$ {\it Uniloadable webs}:  webs which support only one loading (up to a positive multiplicative constant).\\
 
Theorem 1.1 then reads as follows: 

\begin{theorem}\label{mainthm1}{\bf Existence of a web under tension}
	\newline
	Let  $\CA_{\BX}$ be the cone of displacements $\BU=(\Bu_1,\Bu_2,\ldots,\Bu_N)$ at points $\BX=(\Bx_1,\Bx_2,\ldots,\Bx_N)$ defined by
	\beq {\CA_{\BX}}:=\{\BU\in (\R^d)^N\ :\ \forall\,  1\le i<j\le N,\ (\Bu_i-\Bu_j)\cdot(\Bx_i-\Bx_j)\geq 0 \}. \eeq{W10}
	Then, the following condition
	\beq \inf_{\BU\in{\CA_{\BX}}}\BF\cdot\BU\geq 0 \eeq{W9}
	is necessary and sufficient to ensure the existence of a finite web under tension that supports the loading $\BF$ at points $\BX$. In such a case, the web connecting the terminal points $\BX$ pairwise supports the loading $\BF$.
\end{theorem}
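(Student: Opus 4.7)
I would establish the two implications via linear programming duality (Farkas' lemma) combined with a discrete monotone‑extension lemma.

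\emph{Sufficiency (together with the ``moreover'' statement).} I would consider only the complete pairwise web whose bars are $[\Bx_i,\Bx_j]$ for $1\le i<j\le N$, with no internal nodes. Finding an admissible stress state is then the finite linear feasibility problem: find $\sigma_{ij}\ge 0$ satisfying the $N$ vectorial equilibrium equations $\sum_{j\ne i}\sigma_{ij}(\Bx_j-\Bx_i)/|\Bx_j-\Bx_i|=-\Bf_i$. By Farkas' lemma this system is infeasible iff there exists $\BU=(\Bu_1,\ldots,\Bu_N)$ satisfying the transposed inequalities, which unwind to exactly $(\Bu_i-\Bu_j)\cdot(\Bx_i-\Bx_j)\ge 0$ for all $i<j$ (that is, $\BU\in\CA_\BX$), together with $\BF\cdot\BU<0$. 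Hence the complete web supports $\BF$ at $\BX$ precisely when \eq{W9} holds; this proves sufficiency and the ``moreover'' clause in one stroke.

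\emph{Necessity.} Suppose some web $W$ on nodes $\Bx_1,\ldots,\Bx_N,\Bx_{N+1},\ldots,\Bx_M$ (the last $M-N$ being internal) supports $\BF$ with stresses $\sigma_{ij}\ge 0$. Given $\BU\in\CA_\BX$, I would extend it to displacements $\Bu_{N+1},\ldots,\Bu_M$ so that the monotonicity $(\Bu_i-\Bu_j)\cdot(\Bx_i-\Bx_j)\ge 0$ holds for \emph{every} pair of indices in $\{1,\ldots,M\}$. Granted such an extension, dotting the equilibrium at each node with $\Bu_i$ and summing gives the virtual‑work identity
\[
\BF\cdot\BU \;=\; \sum_{\text{bars }(i,j)\in W}\sigma_{ij}\,\frac{(\Bu_i-\Bu_j)\cdot(\Bx_i-\Bx_j)}{|\Bx_i-\Bx_j|}\;\ge\;0,
\]
since every $\sigma_{ij}\ge 0$ and every ratio is nonnegative by the extended monotonicity. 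Taking the infimum over $\CA_\BX$ yields \eq{W9}.

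\emph{The extension lemma (the main obstacle).} I would prove it by adding one new point $\Bx_0$ at a time to an existing monotone family $\{(\Bx_i,\Bu_i)\}_{i=1}^{k}$. The conditions on the unknown $\Bu_0\in\R^d$ form the linear system $\Bu_0\cdot(\Bx_0-\Bx_i)\ge \Bu_i\cdot(\Bx_0-\Bx_i)$ for $i=1,\ldots,k$. By Farkas' lemma, infeasibility would supply nonnegative multipliers $\lambda_i$, not all zero, with $\sum_i\lambda_i(\Bx_0-\Bx_i)=0$ (so $\Bx_0$ is a convex combination of the $\Bx_i$ with weights $\lambda_i/S$, $S=\sum_i\lambda_i>0$) and $\sum_i\lambda_i\Bu_i\cdot(\Bx_0-\Bx_i)>0$. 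A short symmetrization using $S\Bx_0=\sum_j\lambda_j\Bx_j$ however produces the identity
\[
\sum_i\lambda_i\Bu_i\cdot(\Bx_0-\Bx_i) \;=\; -\frac{1}{S}\sum_{i<j}\lambda_i\lambda_j\,(\Bu_i-\Bu_j)\cdot(\Bx_i-\Bx_j)\;\le\;0,
\]
by the monotonicity of the existing data, contradicting the Farkas witness. Hence the one‑point extension is always feasible; iterating it over the $M-N$ internal nodes completes the lemma and the proof.
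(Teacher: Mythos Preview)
Your argument is correct. The sufficiency half is essentially identical to the paper's: both apply Farkas' lemma to the complete pairwise web, obtaining the representation $\BF=\sum_{i<j}\lambda_{ij}\BF^{(i,j)}$ with $\lambda_{ij}\ge 0$ and hence an admissible stress on the web joining all pairs.

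The necessity half, however, is genuinely different. The paper extends the discrete data $\BU\in\CA_\BX$ to a Lipschitz field $\widetilde\Bu:\R^d\to\R^d$ satisfying $(\widetilde\Bu(\Bx)-\widetilde\Bu(\By))\cdot(\Bx-\By)\ge -\kappa\|\Bx-\By\|^2$ for arbitrarily small $\kappa>0$; this is done via a contraction trick and Kirszbraun's theorem, followed by mollification and Green's formula for the stress measure, and finally the limit $\kappa\to 0$. Your route stays entirely finite-dimensional: you extend $\BU$ only to the internal nodes $\Bx_{N+1},\ldots,\Bx_M$, one node at a time, with \emph{exact} monotonicity, and then read off $\BF\cdot\BU\ge 0$ from the elementary virtual-work identity. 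Your one-point extension lemma is correct (and the symmetrization identity you wrote is indeed the key step; the case $S=0$ is vacuous since $\lambda\ge 0$, $\lambda\neq 0$ forces $S>0$). The gain of your approach is economy: no Kirszbraun, no mollifiers, no measure theory, no $\kappa$-relaxation. The gain of the paper's approach is scope: because the extension lives on all of $\R^d$ and Green's formula is applied to the stress \emph{measure}, the same argument establishes necessity for the generalized webs introduced later in the paper, whereas your discrete extension is tailored to finite webs (which is all the stated theorem requires).
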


Section \ref{Admissible_loadings} is dedicated to the proof of Theorem \ref{mainthm1} and related consequences, whereas Section \ref{Mechanical_inter} provides insights on the mechanical meaning of the theorem with special reference to the two-dimensional case. In general, from a mechanical point of view, the statement
says that the work performed by $\BF$ is non negative for any (infinitesimal)\footnote{Notice that we solve this problem within the context of infinitesimal elasticity: examples of applications of the finite deformation theory to describe the geometric nonlinearity are given, for instance, by \cite{Crusells:2017:AMF,Yang:2007:GNA,Thai:2011:NSD}.} displacements $\BU$ corresponding to a global expansion of the system of points $\BX$. Notice that condition \eqref{W9} provides a characterization of the set 
$\CA_{\BX}^*$ of all the loadings $\BF$ at points $\BX$ which can be supported by some finite web as the solution to a finite dimensional linear programming
problem. Moreover, Theorem \ref{mainthm1} states that a web supporting the given loading is the one that connects the terminal points pairwise (see Example \ref{Ex1}).

\begin{Aexa}\label{Ex1}
	Consider a balanced set of forces $\Bf_1$,\ldots,$\Bf_N$ at points $\Bx_1$,\ldots,$\Bx_N$ that are directed radially outwards from a central point $\Bx_0$ (so that $\Bf_i=c_i (\Bx_i-\Bx_0)$ for some set of positive coefficients $c_i$).  Note that the balance of forces implies that $\Bx_0$ must belong to the convex hull of the points  $\Bx_1$,\ldots,$\Bx_N$.
	
	\begin{figure}[!ht]
		\centering
		{	\begin{tikzpicture}[scale=0.50]
			\draw[black, thick] (0,0) -- (-4,0);
			\draw[black, thick] (0,0) -- (-3,4);
			\draw[black, thick] (0,0) -- (2,4);
			\draw[black, thick] (0,0) -- (4,3);
			\draw[black, thick] (0,0) -- (4,-3);
			\filldraw[black] (0,0) circle (2pt);
			\node (X0) at (-0.2,-0.6) {$\Bx_0$}; 
			\filldraw[black] (-4,0) circle (2pt) ;
			\node (X1) at (-4.2,-0.6) {$\Bx_1$};
			\filldraw[black] (-3,4) circle (2pt);
			\node (X2) at (-2.7,4.5) {$\Bx_2$};
			\filldraw[black] (2,4) circle (2pt);
			\node (X3) at (1.5,4.4) {$\Bx_3$};
			\filldraw[black] (4,3) circle (2pt);
			\node (X4) at (4.6,2.6) {$\Bx_4$};
			\filldraw[black] (4,-3) circle (2pt);
			\node (X5) at (4.7,-2.8) {$\Bx_5$};
			\draw[->,red,thick](-4,0)--(-5,0);
			\draw[->,red,thick](-3,4)--(-3.75,5);
			\draw[->,red,thick](2,4)--(2.2,4.4);
			\draw[->,red,thick](4,3)--(4.4,3.3);
			\draw[->,red,thick](4,-3)--(6,-4.5);
			\node (F1) at (-5.3,0){\red $\Bf_1$};
			\node (F2) at (-3.9,5.4){\red $\Bf_2$};
			\node (F3) at (2.3,4.7){\red $\Bf_3$};
			\node (F4) at (4.6,3.6){\red $\Bf_4$};
			\node (F5) at (6.4,-4.1){\red $\Bf_5$};
			\node (a) at (0,-4.5){(a)};
			\end{tikzpicture}}
		{   \begin{tikzpicture}[scale=0.50]
			\draw[black, thick] (-4,0)--(-3,4);
			\draw[black, thick] (-4,0)--(2,4);
			\draw[black, thick] (-4,0)--(4,3);
			\draw[black, thick] (-4,0)--(4,-3);
			\draw[black, thick] (-3,4)--(2,4);
			\draw[black, thick] (-3,4)--(4,3);
			\draw[black, thick] (-3,4)--(4,-3);
			\draw[black, thick] (2,4)--(4,-3);
			\draw[black, thick] (4,3)--(4,-3);
			\draw[black, thick] (4,3)--(2,4);
			\filldraw[black] (0,0) circle (2pt);
			\node (X0) at (-0.2,-0.6) {$\Bx_0$}; 
			\filldraw[black] (-4,0) circle (2pt) ;
			\node (X1) at (-4.2,-0.6) {$\Bx_1$};
			\filldraw[black] (-3,4) circle (2pt);
			\node (X2) at (-2.7,4.5) {$\Bx_2$};
			\filldraw[black] (2,4) circle (2pt);
			\node (X3) at (1.5,4.4) {$\Bx_3$};
			\filldraw[black] (4,3) circle (2pt);
			\node (X4) at (4.6,2.6) {$\Bx_4$};
			\filldraw[black] (4,-3) circle (2pt);
			\node (X5) at (4.7,-2.8) {$\Bx_5$};
			\draw[->,red,thick](-4,0)--(-5,0);
			\draw[->,red,thick](-3,4)--(-3.75,5);
			\draw[->,red,thick](2,4)--(2.2,4.4);
			\draw[->,red,thick](4,3)--(4.4,3.3);
			\draw[->,red,thick](4,-3)--(6,-4.5);
			\node (F1) at (-5.3,0){\red $\Bf_1$};
			\node (F2) at (-3.9,5.4){\red $\Bf_2$};
			\node (F3) at (2.3,4.7){\red $\Bf_3$};
			\node (F4) at (4.6,3.6){\red $\Bf_4$};
			\node (F5) at (6.4,-4.1){\red $\Bf_5$};
			\node (b) at (0,-4.5){(b)};
			\end{tikzpicture}}
		\caption{In this example the forces $\BF$ are directed radially outwards from a central point $\Bx_0$ and so the web that connects the terminal points $\BX$ to $\Bx_0$ supports such a loading. Among all the webs that can support $\BF$, Theorem \ref{mainthm1} provides the web connecting the terminal points pairwise and in Example \ref{Ex1} we determine the stress state in each wire of such a web and we prove it is an equilibrium stress state. }
		\labfig{1}
	\end{figure}
	
	Clearly, the web formed by the wires connecting the points $\Bx_1$,\ldots,$\Bx_N$ to $\Bx_0$, Figure \ref{fig:1}(a), supports this loading: $\sigma_{i0}:=c_i \|\Bx_i-\Bx_0\|^{-1}$ is an admissible stress state on this web.
	But we can find another web which supports the same loading: indeed, the web that connects the points $\Bx_1$,\ldots,$\Bx_N$ pairwise, Figure \ref{fig:1}(b), is suitable when endowed with the stress state $\sigma_{ij}= \|\Bx_i-\Bx_j\| c_i c_j (\sum_k c_k)^{-1}$, with the equilibrium condition
	\begin{align*}
	\Bf_i + \sum_{j\not=i} \sigma_{ij} \frac {\Bx_j-\Bx_i}{\|\Bx_j-\Bx_i\|} &= c_i \left(\Bx_i-\Bx_0\right)+c_i \left(\sum_k c_k\right)^{-1}  \sum_j  c_j (\Bx_j-\Bx_i) \\
	&= c_i (\Bx_i-\Bx_0)
	+c_i \left(\sum_k c_k\right)^{-1}  \left(\sum_j  \Bf_j-\sum_j  c_j (\Bx_i-\Bx_0) \right) =0\end{align*}
	being satisfied at each node $\Bx_i$.
	
\end{Aexa}

Section \ref{Channelling} focuses on another major topic of the paper, that is, on channeling or redistributing stresses: we are interested in webs able to channel forces in a controlled way.
For example, if one considers, say, a bicycle wheel or a suspension bridge, then, a desired distribution of forces
is usually achieved by appropriately tightening the spokes or cables (clearly, the layout of these substructures is also essential). By contrast here we seek to distribute stress through judicious choices of the geometry of the web.

Notice that distributing stresses in wires is quite
different than distributing electrical currents in conducting wires. At a junction of more than two conducting wires one cannot tell in advance
(without looking at the rest of the circuit) how much flow will be  channeled into the different wires (this is an advantage if one wants the current to flow where most needed, a disadvantage if one wants to control the allocation of current, as in an irrigation system). This is due to the fact that, in a conduction network, one only has to satisfy Kirchoff's law which states that the
outgoing current has to balance the incoming current. By contrast, when distributing stresses, one has that at a node where four noncoplanar wires join (for the three-dimensional case, or three non collinear wires for the two-dimensional case), balance of forces implies that the tension in one wire, and the geometry of the junction, uniquely determines the tension in the other wires. Thus, having at each internal node a coordination number of four for the three-dimensional case, or three for the two-dimensional case, is important to uniquely determining the loading that a web can support. 
This principle underlies the construction of ``pentamode materials'' \cite{Milton:1995:WET}, which have a diamond-like structure with a  coordination number of 4 at each node of the structure. As a consequence, the stress field
is essentially uniquely determined: like fluids which only support a hydrostatic loading, pentamodes only support one loading but, unlike fluids, that loading can be a combination of hydrostatic and shear forces. Pentamode materials have been studied for their use in cloaking, in particular for cloaking against sonar. They can guide the acoustic wave around an object while having little impedance mismatch at the membrane boundary with the surrounding fluid \cite{Norris:2008:ACT}.

Webs of springs that support only one loading (up to a multiplicative constant) were instrumental in \cite{Camar:2003:DCS,Vasquez:2011:CCS}. The corresponding elastic energies have the form $(\BF\cdot\BU)^2$, when their terminal nodes have displacements $\BU=(\Bu_1,\Bu_2,\ldots,\Bu_N)$, so that the webs are able to support only forces proportional to $\BF$. In that context, as the elastic energy can be expressed using a rank-one matrix in the form $\BU\cdot (\BF\otimes\BF)\cdot \BU$ , such webs were called ``rank-one webs''. In this
paper, as we are interested in energies that are not necessarily quadratic, we prefer to use the definition of {\it uniloadable webs}. In Section \ref{Channelling} we establish that, apart from some exceptional cases, if forces $\Bf_1,\Bf_2,\ldots,\Bf_N$ at prescribed points $\Bx_1,\Bx_2, \ldots, \Bx_N$ are supported by some web, they are also supported by a uniloadable one.

The uniloadable networks we introduce here if prestressed could replace the pentamode materials in the aforementioned cloaking applications, allowing much greater flexibility in the design and economy in the number of junctions and wire elements.

One may not be interested only in uniloadable webs but more generally in webs $W$ for which the set $\CC^W_\BX$ of supported loadings is prescribed. It is clear that, for a given web  $W$ with terminal nodes at $\BX$, the set $\CC^W_\BX$  is a convex cone contained in $\CA^*_\BX$.  In Section \ref{Channelling}, we show that the converse
is true too in an approximate sense: given a convex cone $\CC\subset\CA_\BX^*$, one can find a sequence of webs $W_n$
such that $\CC^{W_n}_{\BX}$ approaches $\CC$ as $n\to\infty$. For two-dimensional webs where the points $\BX$ are the vertices of a convex polygon, a similar question was addressed in
Theorem 2 of \cite{Milton:2017:SFI}.

\section{On the existence of a web under tension}\label{Admissible_loadings}

This Section is dedicated to the proof of Theorem \ref{mainthm1} and it contains some mathematical technicalities. Therefore, we recommend the reader who is more interested in the mechanical interpretation of the theorem to skip to Section \ref{Mechanical_inter}.

Given a set of forces $\BF$ applied at the points $\BX$, we question the existence of a web supporting such a loading that has all the wires under tension. Recall that the equilibrium of a wire web is achieved if the tension is constant in each wire and if each node is in equilibrium. This situation admits a nice and synthetic, even if a bit abstract, formulation in terms of measures which is convenient for proving Theorem \ref{mainthm1}.

Given a web $W$, the associated measure is
\begin{equation}{\mathcal W}:={\cal H}^1\restrict {[\Bx_i,\Bx_j]}+{\cal H}^1\restrict {[\Bx_k,\Bx_l]}+\dots \label{WW}\end{equation}
where ${\cal H}^1\restrict {[\Bx_i,\Bx_j]}$ stands for the line measure (the one dimensional Hausdorff measure) concentrated on the segment $[\Bx_i,\Bx_j]$. Accordingly the stress state $\sigma$ can be represented by the measure
\begin{equation} \ssigma:=\sigma_{ij} \frac {\Bx_i-\Bx_j}{\|\Bx_i-\Bx_j\|}\otimes \frac {\Bx_i-\Bx_j}{\|\Bx_i-\Bx_j\|} {\cal H}^1\restrict {[\Bx_i,\Bx_j]}+\sigma_{kl} \frac {\Bx_k-\Bx_l}{\|\Bx_k-\Bx_l\|}\otimes \frac {\Bx_k-\Bx_l}{\|\Bx_k-\Bx_l\|} {\cal H}^1\restrict {[\Bx_k,\Bx_l]}+\dots  \label{ssigma} \end{equation}
Similarly, associated with the applied system of forces is the discrete measure
\begin{equation} {\cal F}:= \sum_{i+1}^N \Bf_i\, \delta_{\Bx_i} \label{FF} \end{equation}
where  $\delta_{\Bx_i}$ stands for the Dirac measure at point $\Bx_i$.  The equilibrium condition then simply reads
\begin{equation}\Div\ssigma+{\cal F}=0 \label{eequi}\end{equation}
and the requirement that all wires be under tension is equivalent to the requirement that the measure $\ssigma$ take values in the set of positive semi-definite symmetric matrices. We will denote with ${\cal M}^+$ the set of such measures.
Using this formulation, searching for a finite web boils down to finding $\ssigma\in {\cal M}^+$ of the form \eqref{ssigma} such that \eqref{eequi} is satisfied.

If we drop the constraint that $\ssigma$ must be of the form \eqref{ssigma}, we are led to an interesting generalization: a {\it generalized web} ${\mathcal W}$ is a positive measure, and it supports the loading ${\mathcal F}$ if there exists some  $\ssigma\in {\cal M}^+$ absolutely continuous with respect to ${\mathcal W}$ satisfying the equilibrium condition \eqref{eequi}.
The relation between this generalized formulation and the so-called Michell problem is provided in Section \ref{Michell}.


\subsection{Proof of Theorem \ref{mainthm1}}

\begin{proof}
	To prove that condition  \eq{W9} is necessary, we consider an admissible web stress  $\ssigma$  for the loading $\BF$.
	As  $\ssigma$ is a positive semidefinite tensor-measure with compact support, by Green's generalized formula
	we have (see Section 6 of \cite{Milton:2017:SFI}):
	\beq 0\leq \Big<\ssigma, \Be(\Bu)\Big> =  - \Big<\Div \ssigma,\Bu\Big>  = \Big<{\cal F}, \Bu\Big> = \BF\cdot \BU.
	\eeq{NS-1}
	for all $C^1$ fields $\Bu$ such that $\Be(\Bu):= (\Grad\Bu(\Bx)+(\Grad\Bu(\Bx))^T)/2$ is positive semidefinite.
	Now, let $\BU=(\Bu_1,\Bu_2,\ldots,\Bu_N)$ be an element of $\CA_\BX$.
	For any $\kappa>0$, Lemma \ref{lemmaext} (see below) provides a Lipschitz extension $\widetilde\Bu: \R^d \to \R^d$ satisfying $\widetilde\Bu(\Bx_\ell)=\Bu_\ell$
	and
	\beq \forall (\Bx,\By)\in (\R^d)^2,\  (\widetilde \Bu(\Bx)-\widetilde \Bu(\By))\cdot(\Bx-\By)\geq -\kappa \|\Bx-\By\|^2. \eeq{LE1dup}
	This extension field is differentiable a.e. and, at every point $\Bx$ of differentiability, inequality \eqref{LE1dup} implies that $\Be(\widetilde \Bu)\geq -\kappa\BI$.
	In order to apply Green's formula, we introduce a regularized field $\Bu^\eta:= \widetilde\Bu * \rho^\eta$, $\rho^\eta$ being a smooth convolution mollifier (i.e., non negative, supported in the ball of radius $\eta$, and such that $\int \rho^\eta =1 $).
	It can be readily checked that the strain  $\Be(\Bu^\eta)$  associated with $\Bu^\eta$ is smooth and satisfies everywhere the inequality  $\Be(\Bu^\eta)\geq -\kappa\BI$.
	By applying \eqref{NS-1} to the field $\Bu= \Bu^\eta + \kappa \Bx$, we deduce that
	$$ \sum_i\Bf_i\cdot\Bu^\eta(\Bx_i)+ \kappa\sum_i\Bf_i\cdot\Bx_i \ge 0.
	$$
	As $\Bu^\eta\to \widetilde \Bu $ uniformly on every compact set, we can pass to the limits $\eta \to 0$ and $\kappa\to 0$ in the last inequality to obtain the desired inequality $ \BF \cdot \BU \ge 0$ (see equation \eqref{W9}).
	
	\medskip
	
	To prove that condition \eqref{W9} is sufficient, we consider $\BF$ such that $\BF \cdot \BU \ge 0$ for any $\Bu\in \CA_\BX$.
	The linear conditions $(\Bu_i-\Bu_j)\cdot(\Bx_i-\Bx_j)\geq 0$ can be rewritten in the form $\BF^{(i,j)}\cdot\BU\geq 0$, where:
	\beq \BF^{(i,j)}=(\Bf_1^{(i,j)},\Bf_2^{(i,j)},\ldots,\Bf_N^{(i,j)})\quad \text{with}\quad \Bf_\ell^{(i,j)} = \begin{cases}  \Bx_j-\Bx_i &\text{  if  }\ell=i, \\
		\Bx_i-\Bx_j &\text{  if  }\ell=j, \\
		0 &\text{  otherwise.}  \end{cases}
	\eeq{RT4}
	Hence, thanks to Farkas Lemma\cite{Farkas:1902:TDE}, we know that $\BF$ is a linear combination of the linear forms $\BF^{(i,j)}$ with non negative coefficients:
	\beq \BF=\sum_{1\le i<j\le N} \lambda_{i,j}\, \BF^{(i,j)} \quad ,\quad  \lambda_{i,j}\ge 0.
	\eeq{RT4a}
	It is then easy to check that the  positive semidefinite symmetric tensor measure
	\beq \ssigma= \sum_{1\le i<j\le N}\lambda_{i,j}\, \frac {(\Bx_i-\Bx_j)\otimes (\Bx_i-\Bx_j)}{\|\Bx_i-\Bx_j\|} {\cal H}^1\restrict {[\Bx_i,\Bx_j]}, \eeq{RT2}
	is a possible web stress for the loading $\cal F$.
	Indeed,
	for any $\Ba$ and $\Bb$ in $\R^d$, we have\linebreak $\Div((\Bb-\Ba)\otimes (\Bb-\Ba) {\cal H}^1\restrict{|[\Ba,\Bb]})=\|\Bb-\Ba\|\,  (\Bb-\Ba) (\delta_\Ba-\delta_\Bb)$, thus
	$ \Div\ssigma=-\sum_i \Bf_i \delta_{\Bx_i}=-{\cal F}.$
	Let us emphasize that this web stress measure involves only the original nodes $(\Bx_1, \Bx_2, \dots, \Bx_N)$ as claimed in the theorem.
\end{proof}

We now state and prove the interpolation lemma which was needed in the previous proof.
\begin{lemma}\label{lemmaext}{\bf Interpolation Lemma}
	\newline
	Let $A$ {be} a finite subset of $\R^d$, {and} $\Bu:\ A\to \R^d$ {the field} satisfying
	$(\Bu(\Bx)-\Bu(\By))\cdot(\Bx-\By)\geq 0, \forall (\Bx,\By)\in A^2. $
	Then, for any $\kappa>0$ there exists a Lipschitz extension $\widetilde \Bu$ of $\Bu$ on $\R^d$ satisfying
	\beq \forall (\Bx,\By)\in (\R^d)^2,\  (\widetilde \Bu(\Bx)-\widetilde \Bu(\By))\cdot(\Bx-\By)\geq -\kappa \|\Bx-\By\|^2. \eeq{LE1}
\end{lemma}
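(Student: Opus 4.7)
The plan is to reduce \eqref{LE1} to a genuine monotonicity requirement on $\R^d$ and then construct the extension via a Minty--Cayley reparametrization combined with Kirszbraun's theorem. First I set $\Bv(\Bx_i) := \Bu(\Bx_i) + \kappa \Bx_i$ on $A$; the hypothesis becomes the \emph{strong} monotonicity
$$(\Bv(\Bx_i) - \Bv(\Bx_j)) \cdot (\Bx_i - \Bx_j) \;\ge\; \kappa \|\Bx_i - \Bx_j\|^2,$$
while \eqref{LE1} for $\widetilde \Bu$ is equivalent to the plain monotonicity of $\widetilde \Bv(\Bx) := \widetilde \Bu(\Bx) + \kappa \Bx$ on all of $\R^d$. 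It therefore suffices to produce a Lipschitz monotone extension $\widetilde \Bv$ of $\Bv$ and set $\widetilde \Bu(\Bx) := \widetilde \Bv(\Bx) - \kappa \Bx$.

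To build $\widetilde \Bv$, I use that finiteness of $A$ gives $\Bv$ a (possibly large) Lipschitz constant $L \ge \kappa$ on $A$. Pick $\lambda := 1/L$ and set $\Bz_i := \Bx_i + \lambda \Bv(\Bx_i)$, $\Bc_i := \Bx_i - \lambda \Bv(\Bx_i)$. A direct expansion yields
$$\|\Bz_i - \Bz_j\|^2 - \|\Bc_i - \Bc_j\|^2 \;=\; 4\lambda (\Bv(\Bx_i) - \Bv(\Bx_j))\cdot(\Bx_i - \Bx_j) \;\ge\; 4\lambda \kappa \|\Bx_i - \Bx_j\|^2,$$
together with $\|\Bz_i - \Bz_j\| \le (1 + \lambda L) \|\Bx_i - \Bx_j\| = 2 \|\Bx_i - \Bx_j\|$. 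These two bounds imply that $\Bz_i \mapsto \Bc_i$ is well defined and $\alpha$-Lipschitz with $\alpha := \sqrt{1 - \kappa/L} < 1$. Kirszbraun's theorem then extends it to an $\alpha$-Lipschitz map $C:\R^d \to \R^d$. Because $C$ is a strict contraction, for every $\Bx \in \R^d$ Banach's fixed-point theorem produces a unique $\Bz(\Bx)$ solving $\Bz(\Bx) + C(\Bz(\Bx)) = 2\Bx$, and the assignment $\Bx \mapsto \Bz(\Bx)$ is Lipschitz with constant $2/(1-\alpha)$. I then define $\widetilde \Bv(\Bx) := (\Bz(\Bx) - C(\Bz(\Bx)))/(2\lambda)$. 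At a data point $\Bz(\Bx_i) = \Bz_i$ and $C(\Bz_i) = \Bc_i$, so $\widetilde \Bv(\Bx_i) = \Bv(\Bx_i)$; and monotonicity follows at once from the algebraic identity
$$(\widetilde \Bv(\Bx) - \widetilde \Bv(\By)) \cdot (\Bx - \By) \;=\; \tfrac{1}{4\lambda}\bigl(\|\Bz(\Bx) - \Bz(\By)\|^2 - \|C(\Bz(\Bx)) - C(\Bz(\By))\|^2\bigr) \;\ge\; 0,$$
which holds since $C$ is $1$-Lipschitz. Rewritten in terms of $\widetilde \Bu$, this is exactly \eqref{LE1}.

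The delicate point is arranging the strict contraction bound $\alpha < 1$, which is what legitimizes the global Banach inversion of the Cayley map. This rests on two features built into the hypothesis: the \emph{strong} monotonicity produced by adding $\kappa \Bx$ (the hypothesis only gives $\ge 0$), and the finiteness of $A$, which supplies the a priori Lipschitz bound $L$ for $\Bv$ on $A$. Without either one the Kirszbraun extension $C$ would only be non-expansive, and the fixed-point inversion step could fail (for instance if $C$ were an affine isometry of the form $\Bz \mapsto -\Bz + b$).
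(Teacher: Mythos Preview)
Your argument is correct. Both your proof and the paper's hinge on Kirszbraun's theorem applied to a ``deformation'' map of the form $\Bx \mapsto (\text{contracted identity}) - (\text{small})\cdot\Bu(\Bx)$, but the implementations differ. The paper works directly in the $\Bx$-variable: it sets $\BGf(\Bx)=(1-\lambda s^2)\Bx - s\,\Bu(\Bx)$ with carefully tuned constants (depending on $\max_A\|\Bu\|$, $\max_A\|\Bx\|$, and the minimal pairwise distance in $A$), checks that $\BGf$ is $1$-Lipschitz on $A$, Kirszbraun-extends to $\widetilde\BGf$, and then recovers $\widetilde\Bu$ by the explicit algebraic inversion $\widetilde\Bu(\Bx)=s^{-1}\big((1-\lambda s^2)\Bx-\widetilde\BGf(\Bx)\big)$; no fixed-point step is needed. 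You instead first shift to the strongly monotone $\Bv=\Bu+\kappa\,\mathrm{id}$ and pass to the Minty/Cayley variables $\Bz=\Bx+\lambda\Bv$, $\Bc=\Bx-\lambda\Bv$; the map $\Bz_i\mapsto\Bc_i$ (essentially the same object as the paper's $\BGf$, but viewed as a function of $\Bz$ rather than of $\Bx$) is a strict contraction, and after Kirszbraun you must invoke Banach's fixed-point theorem to invert $\Bz+C(\Bz)=2\Bx$. Your route is the textbook monotone-operator construction and makes the role of strong monotonicity transparent; the paper's route is slightly more ad hoc but shorter, since the algebraic inversion replaces the contraction-mapping argument.
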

\begin{proof}
	As $\Bu$ is bounded on the bounded set $A$, there exists $M$ such that, for any $\Bx\in A$, $\|\Bu(\Bx)\|\leq M$ and $\|\Bx\|\leq M$. Moreover, as $A$ is finite, there exists $\Gd>0$ such that, for any  distinct points $\Bx$ and $\By$ in $A$,  $\|\Bx-\By\|\geq \Gd$. We set $\lambda=\frac {8\, M^2}{\Gd^2}$ and choose $s$ such that $0<\lambda\, s<\min(\kappa,1) $. Let us consider $\BGf$ defined on $A$ by\footnote{Physically, when $s$ is small, we can think of $\BGf(\Bx)$ as a deformation $\Bx\to\BGf(\Bx)$ associated with the displacement field $\Bu(\Bx)$. The additional uniform contractive factor
		of $(1-\lambda\, s^2)$ is needed to account for the finite deformation corrections when $\Bu(\Bx)-\Bu(\By)$ is perpendicular to $\Bx-\By$ (with say $\Bu(\By)=0$ and $\By=0$,
		and $\Bu(\Bx)$ orthogonal to $\Bx$, the distance $|\Bx-s\Bu(\Bx)|$ lengthens as $s$ is increased).}
	\beq \BGf(\Bx)=(1-\lambda\, s^2) \Bx - s \Bu(\Bx). \eeq{LE2}
	For any  distinct points $\Bx$ and $\By$ in $A$, we have, by straightforward computation
	\beqa
	\|\BGf(\Bx)- \BGf(\By)\|^2&=&\|(1-\lambda\, s^2)(\Bx-\By)-s(\Bu(\Bx)-\Bu(\By))\|^2\nonum
	&=&\left\|(\Bx-\By)-s\left(\Bu(\Bx)-\Bu(\By)+\lambda s (\Bx-\By)\right)\right\|^2\nonum
	&=& \|\Bx-\By\|^2 + s^2\|\Bu(\Bx)-\Bu(\By)+\lambda s(\Bx-\By)\|^2\nonum
	&~&\hskip 5 cm -2 s \big((\Bx-\By)\cdot (\Bu(\Bx)-\Bu(\By)+\lambda s\|\Bx-\By\|^2\big)\nonum
	&\leq& \|\Bx-\By\|^2 + s^2\|\Bu(\Bx)-\Bu(\By)+\lambda s(\Bx-\By)\|^2- 2 \lambda s^2 \|\Bx-\By\|^2\nonum
	&\leq& \|\Bx-\By\|^2 + 4 s^2\left(\|\Bu(\Bx)\|^2+\|\Bu(\By)\|^2+(\lambda s)^2\|\Bx\|^2+(\lambda s)^2\|\By\|^2\right)- 2 \lambda s^2 \|\Bx-\By\|^2\nonum
	&\leq& \|\Bx-\By\|^2 + 16 M^2\, s^2-2 \lambda \, \Gd^2\,s^2 \nonum
	&\leq& \|\Bx-\By\|^2.
	\eeqa{LE3}
	Kirszbraun's Theorem \cite{Kirszbraun:1934:UDZ} proves the existence of an extension $\widetilde \BGf$ of $\BGf$ on $\R^d$ satisfying the same condition: for any $(\Bx,\By)\in(\R^d)^2$,
	\beq \|\widetilde\BGf(\Bx)- \widetilde\BGf(\By)\|\leq \|\Bx-\By\|. \eeq{LE4}
	Let us now define $\widetilde \Bu$ on $\R^d$ by setting
	\beq \widetilde \Bu(\Bx)=\frac{(1-\lambda\, s^2)\Bx-\widetilde\BGf(\Bx)}{s}. \eeq{LE5}
	For any $\Bx\in A$ we have $\widetilde \Bu(\Bx)=\Bu(\Bx)$. Moreover  $\widetilde \Bu$ is a Lipschitz function which satisfies
	\beqa
	(\widetilde \Bu(\Bx)-\widetilde \Bu(\By))\cdot (\Bx-\By)&= &s^{-1}\Big((1-\lambda\, s^2) \|\Bx-\By\|^2-(\widetilde \BGf(\Bx)-\widetilde \BGf(\By)) \cdot (\Bx-\By)\Big)\nonum
	&\geq & -\lambda\, s\, \|\Bx-\By\|^2\geq -\kappa \|\Bx-\By\|^2.
	\eeqa{LE6}
\end{proof}

\subsection{Support of the stress field}

From a physical viewpoint it seems obvious that a finite web supporting the loading $\BF$ at the points $\BX=(\Bx_1,\Bx_2,\ldots,\Bx_N)$ should have an associated stress measure vanishing outside the convex hull of the points $\Bx_1,\Bx_2,\ldots,\Bx_N$. This section is devoted to proving this fact which turns out to be valid also for generalized webs.
\begin{theorem}\label{Thsupport}
	Let $\cal F$ be a vector measure with compact support $K$ and $\ssigma$ in $\CM^+$ such that $\Div\ssigma+{\cal F}=0$. Then the support of $\ssigma$ is contained in the convex envelope $co(K)$ of $K$.
\end{theorem}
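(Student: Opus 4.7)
The plan is to argue by contradiction, exploiting the positive-semidefinite structure of $\ssigma$ together with the same Green-type duality used in the proof of Theorem \ref{mainthm1}. Suppose $\Bx_0 \in \mathrm{supp}(\ssigma)$ but $\Bx_0 \notin co(K)$. Since $co(K)$ is compact and convex, the Hahn--Banach separation theorem yields a unit vector $\Bn \in \R^d$ and a constant $c \in \R$ with $\Bn \cdot \By \le c$ for every $\By \in K$ while $\Bn \cdot \Bx_0 > c$. The idea is to build a test displacement field whose strain is PSD but whose work against ${\cal F}$ vanishes, and then to read off from the identity $\int \Be(\Bu) : d\ssigma = \int \Bu \cdot d{\cal F}$ that $\ssigma$ is locally annihilated near $\Bx_0$.

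Concretely, I would choose a smooth nondecreasing $\psi: \R \to \R$ with $\psi \equiv 0$ on $(-\infty, c]$ and $\psi'(t) > 0$ for $t > c$, and set $\Bu(\Bx) := \psi(\Bn \cdot \Bx)\,\Bn$. Then $\Be(\Bu) = \psi'(\Bn \cdot \Bx)\,\Bn \otimes \Bn$ is positive semidefinite everywhere, and $\Bu \equiv 0$ on $K \supseteq \mathrm{supp}({\cal F})$. Multiplying $\Bu$ by a smooth cutoff equal to $1$ on a large ball containing $\mathrm{supp}(\ssigma) \cup \mathrm{supp}({\cal F})$ yields a compactly supported test field to which the generalized Green formula applies; the terms generated by the cutoff gradient vanish since its support is disjoint from $\mathrm{supp}(\ssigma)$, and one obtains
\[
0 \;=\; \int \Bu \cdot d{\cal F} \;=\; \int \Be(\Bu) : d\ssigma \;=\; \int \psi'(\Bn \cdot \Bx)\,\Bn \cdot d\ssigma \cdot \Bn .
\]
The right-hand integrand is a nonnegative scalar measure, hence vanishes; as $\psi' > 0$ on the open half-space $H := \{\Bx : \Bn \cdot \Bx > c\}$, the scalar measure $\Bn \cdot \ssigma \cdot \Bn$ vanishes on $H$. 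Writing $d\ssigma = A\,d\mu$ with $\mu \ge 0$ scalar and $A$ a PSD-valued Borel function, one gets $\Bn \cdot A(\Bx)\Bn = 0$ for $\mu$-a.e.\ $\Bx \in H$; positive semidefiniteness of $A(\Bx)$ (via the factorization $A = B^{T}B$, or Cauchy--Schwarz for the semi-inner product induced by $A$) then forces $A(\Bx)\Bn = 0$ there, so the vector-valued measure $\ssigma \Bn$ vanishes on $H$.

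To finish, observe that small perturbations of $\Bn$ still strictly separate $\Bx_0$ from $co(K)$, so the same argument applies to every $\Bn'$ in an open neighborhood $S \subset S^{d-1}$ of $\Bn$. Picking a basis $\Bn_1,\dots,\Bn_d$ of $\R^d$ lying in $S$ and intersecting the associated half-spaces yields an open neighborhood $V$ of $\Bx_0$ on which $\ssigma \Bn_i = 0$ for each $i$; since the $\Bn_i$ span $\R^d$, this forces $\ssigma \equiv 0$ on $V$, contradicting $\Bx_0 \in \mathrm{supp}(\ssigma)$. The main obstacles I anticipate are bookkeeping rather than conceptual: legitimizing the non-compactly-supported test field by a cutoff argument, and passing rigorously from "$\Bn \cdot \ssigma \cdot \Bn$ vanishes as a scalar measure" to "$\ssigma \Bn$ vanishes as a vector measure" via the PSD structure. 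Both steps are routine but must be executed with care.
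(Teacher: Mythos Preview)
Your approach is correct and genuinely different from the paper's. The paper proceeds in four steps: it first works with smooth fields, applies the divergence theorem on half-balls of radius $R$ together with an $R_n\to\infty$ argument to show $\widetilde\ssigma\cdot\Bm=0$ on bounding hyperplanes, then uses an induction on the ambient dimension (restricting to each slice and applying the $\R^{d-1}$ result) before finally passing to measures by mollification. Your route is more direct: one well-chosen test displacement $\Bu(\Bx)=\psi(\Bn\cdot\Bx)\,\Bn$ and a perturbation of the separating normal do all the work, with no mollification and no dimensional recursion. The paper's argument, by contrast, never needs the Radon--Nikodym decomposition $d\ssigma=A\,d\mu$ nor the PSD linear algebra by which you upgrade ``$\Bn\cdot\ssigma\cdot\Bn=0$'' to ``$\ssigma\Bn=0$''.

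One point to tighten: your cutoff step, as written, chooses a ball containing $\mathrm{supp}(\ssigma)$, which presupposes that $\ssigma$ has compact support---exactly what the theorem is meant to establish. The repair is routine: take $\psi$ bounded with bounded derivative, and use a standard sequence of cutoffs $\chi_R$ equal to $1$ on $B_R$, supported in $B_{2R}$, with $|\nabla\chi_R|\le C/R$. The cross term $\langle \ssigma,\ \nabla\chi_R\otimes_s\Bu\rangle$ is then bounded by $\tfrac{C\|\psi\|_\infty}{R}\,|\ssigma|(B_{2R}\setminus B_R)\to 0$, using only that $|\ssigma|$ is finite---the same implicit hypothesis the paper exploits in its Step~1 (``as $\widetilde\ssigma\in L^1$'').
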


Notice that, from this result, we can deduce that the support of $\ssigma$ is contained in the subspace spanned by the vectors\footnote{Here we choose $\Bx_1$ as the origin for identifying points and vectors.}  $\Bx_1$, $\Bx_2$, $\ldots$, $\Bx_N$ and thus that the loading forces $\Bf_i$ belong to this subspace. Hence, we will be able to reduce our problem to this subspace and assume without loss of generality that  
$\Bx_1$, $\Bx_2$, $\ldots$, $\Bx_N$ span $\R^d$.

\medskip

\begin{proof} By Hahn-Banach separation theorem \cite{Ekeland:1999:CAV}, proving that $\ssigma$ vanishes outside  the convex envelope $co(K)$ of $K$ reduces to checking that $\ssigma$ vanishes on all half spaces  $P^+_{\Bm,a}:=\{\Bx:\, \Bx\cdot \Bm>a\}$ with  $a\in\R$ and $\Bm\in S^{d-1}$ which do not intersect $K$. This verification is achieved in several steps. We first remark that, as  $P^+_{\Bm,a}$ and $K$ do not intersect, we have $\Div\ssigma=0$ on  $P^+_{\Bm,a}$. 
	
	{\bf Step 1:} Let us consider first a field  $\widetilde \ssigma\in C^1(\R^d, \R^{d^2}_{sym})$ such that $\widetilde \ssigma\geq 0$ and $\Div\widetilde \ssigma=0$ on $P^+_{\Bm,a}$.  Without loss of generality assume that $\Bm=\Be_1$, where $\Be_1$ is the unit vector along the $x_1$-axis
	(where $x_1$ is not to be confused with the point $\Bx_1$).
	Let us  apply Green's formula in the half ball $\Omega_R:=\{\Bx: x_1>a,\ |\Bx|<R\}$. We have
	\beq \int_{\partial \Omega_R} (\widetilde \ssigma\cdot \Be_1)\cdot \Bn\, {\rm d}{\cal H}^{d-1}=\int_{\Omega_R} (\Div \widetilde \ssigma)\cdot \Be_1 \, {\rm d}{\cal H}^{d}=0, \eeq{S0}
	where $\Bn$ stands for the outward normal. Dividing the boundary $\partial \Omega_R$ into $\Sigma_R:=\{\Bx:\ x_1=a,\ |\Bx|<R\}$ and $S_R^+:=\{\Bx:\ x_1>a,\ |\Bx|=R\}$ we get
	\beq -\int_{\Sigma_R} (\widetilde \ssigma\cdot \Be_1)\cdot \Be_1\, {\rm d}{\cal H}^{d-1} + \int_{S_R^+} (\widetilde \ssigma\cdot \Be_1)\cdot \Bn\, {\rm d}{\cal H}^{d-1}=0. \eeq{S1}
	Thus we have
	\beq \int_{\Sigma_R} (\widetilde \ssigma\cdot \Be_1)\cdot \Be_1\, {\rm d}{\cal H}^{d-1} \leq \int_{S_R^+} |\widetilde \ssigma|\, {\rm d}{\cal H}^{d-1}. \eeq{S2}
	As $\widetilde \ssigma\in L^1$, $\int_0^{+\infty}( \int_{S_R^+} |\widetilde \ssigma|\, {\rm d}{\cal H}^{d-1})\, {\rm d}R   <+\infty$ and so there exists a sequence $R_n\to +\infty$ such that $$\int_{S_{R_n}^+} |\widetilde \ssigma|\, {\rm d}{\cal H}^{d-1}\to 0. $$
	This implies that 
	\beq \lim_{n\to \infty} \int_{\Sigma_{R_n}}  (\widetilde \ssigma\cdot \Be_1)\cdot \Be_1\, {\rm d}{\cal H}^{d-1} =0. \eeq{S3}
	As $(\widetilde \ssigma\cdot \Be_1)\cdot \Be_1\geq 0$ we get $(\widetilde \ssigma\cdot \Be_1)\cdot \Be_1= 0$ on the whole hyperplane $\{\Bx: \ x_1=a\}$. As $\widetilde \ssigma\geq 0$, we deduce that  $\ssigma\cdot \Be_1= 0$ on this hyperplane.
	
	\medskip \noindent {\bf Step 2:} We remark that, if $\widetilde \ssigma\geq 0 \in C^1(\R^d, \R^{d^2}_{sym})$ satisfies  $\Div\widetilde \ssigma={0}$ on the whole space $\R^d$, then by applying the result of step 1 to every pair $(\Bm,a)\in S^{d-1}\times \R^d$ we get $\widetilde \ssigma={0}$ everywhere. We also remark that this result holds true for any space dimension $d$.
	
	\medskip \noindent {\bf Step 3:} Going back to the case where $\widetilde \ssigma\geq 0 \in C^1(\R^d, \R^{d^2}_{sym})$ satisfies  $\Div\widetilde \ssigma={0}$ only on the half space $P^+_{\Bm,a}$, we apply Step 1 to every pair $(\Bm,t)$ with $t>a$ and we deduce that $\widetilde \ssigma\cdot \Bm=0$ on  the hyperplane $\Sigma_t:=\{\Bx:\ \Bx\cdot \Bm=t\}$. The restriction of $\widetilde \ssigma$ to this  hyperplane is tangential and divergence free. For almost every $t>a$, $\int_{\Sigma_t}|\widetilde \ssigma|<+\infty$, by applying the result of Step 2 to $\R^{d-1}$, we deduce that $\widetilde \ssigma$ vanishes on almost all hyperplanes $\Sigma_t$. Hence, $\widetilde \ssigma$ vanishes on $P^+_{\Bm,a}$.
	
	\medskip \noindent {\bf Step 4:} In order to extend this result to measures, we introduce a smooth mollifier $\rho^\eta$ and $\ssigma^\eta := \ssigma \star \rho^\eta$. Clearly $\ssigma^\eta$ belongs to $L^1\cap C^{\infty}(\R^{d^2}_{sym})$, $\ssigma^\eta\geq 0$ and, for any $t>a$, and for $\eta$ small enough we have $\Div\ssigma^\eta={0}$ on $P^+_{\Bm,t}$. By applying the result of Step 3 to  $\ssigma^\eta$ we get $\ssigma^\eta={0}$ on $P^+_{\Bm,t}$, and by passing to the limit $\eta\to 0$ we get $\ssigma={0}$ on $P^+_{\Bm,t}$. The theorem is proven by passing finally to the limit $t\to a$.
\end{proof}

\subsection{Link with the Michell problem}\label{Michell}

A very old problem in Mechanical Engineering consists in minimizing the 
total volume of a network of elastic bars ({\em trusses})
while the resistance to a given load remains constant. It reduces to a linear programming problem  which, according to our notation, reads:
\begin{equation}
\inf_\sigma \left\{ \sum_{i} \sum_j  |\sigma_{ij}| \|\Bx_j-\Bx_i\| 
\ :\     \Bf_i + \sum_{j\not=i} \sigma_{ij} \frac 
{\Bx_j-\Bx_i}{\|\Bx_j-\Bx_i\|} =0 \ ,\  \forall i   \right\}
\end{equation}
The classical dual formulation is the following maximization problem on 
deformations:
\begin{equation}
\sup_{\BU}  \left\{  \sum_i\Bf_i\cdot\Bu_i  \ :\  
|(\Bu_i-\Bu_j)\cdot(\Bx_i-\Bx_j)|\le \|\Bx_i-\Bx_j)\|^2   \ ,\ \forall 
i\not=j   \right\}
\end{equation}
As no assumption is made on the number of bars, this belongs to the class of topological optimization problems and it
is well known that, in general, no optimal solution exists. Indeed, during 
the optimization process, the number of bars may increase to infinity 
leading thus to diffuse structures.
The crucial contribution of Michell \cite{Michell:1904:LEM} in the 1900's was 
to formulate a generalized version
(called {\em Michell problem}) in order to take into account all 
possible structures which may appear in the limit. In the generalized 
version, attention is focused on the stress carried by the structure 
rather than on its geometry. Michell stated a duality principle and 
obtained the optimality conditions on the stress and strain tensors: 
they share the same eigenvectors (principal directions) and the 
eigenvalues of the strain tensor have a fixed absolute value. Moreover, 
Michell noticed that, in the two-dimensional case,
when the eigenvalues of the strain tensor have opposite sign and when 
the eigenvector fields are smooth enough to define stream lines (called 
``{\em lines of principal action}''), then these lines constitute a 
so-called {\em Hencky-net}. This is a family
of orthogonal curves which represents the limit of the families of bars 
through the optimization process.
We refer to \cite{Bouchitte:2008:MTL} for a detailed  mathematical 
study where  optimality conditions for a generalized truss are established in a rigorous way.
The generalized stress formulation derived by Michell in dimension $d=2$ 
reads as follows:
\begin{equation}
\inf_\ssigma \left\{ \int \rho_0(\ssigma) \  :\    \Div \ssigma + \CF 
=0  \right\}
\end{equation}
where $\rho_0(\ssigma):=|\lambda_1(\ssigma)|+|\lambda_2(\ssigma)|$
denotes the sum of the moduli of the two principal values  of $\ssigma$.
The corresponding dual problem reads:
\begin{equation} \sup_{\widetilde{\Bu}}\{<{\cal F},\widetilde{\Bu}>: \ 
|({\widetilde{\Bu}}(\Bx)-{\widetilde{\Bu}}(\By))\cdot (\Bx-\By)|\leq 
|\Bx-\By|^2\}
\end{equation}
An admissible pair $(\Bu,\ssigma)$ is then optimal if and only if the 
following extremality relation holds (see \cite{Bouchitte:2008:MTL})
\beq <{\cal F},\Bu>\ =\ \int \rho_0(\ssigma)\,{\rm d}\Bx. \eeq{Sopti}

We can now emphasize the link between our problem and the 
Michell problem described above.
It turns out indeed that admissible stress states associated with a 
loading $\BF$ in the cone $\CA_{\BX}^*$
are solutions of the Michell problem.


\begin{theorem}
	Let ${\cal F}$ be a bounded vector-measure with compact support $K$ 
	and $\ssigma\geq 0$ in $M^{d^2}_{sym}$ such that $\Div\ssigma+{\cal 
		F}={0}$. Then
	
	\noindent i) $\ssigma$ is a solution to the Michell problem for 
	${\cal F}$,
	
	\noindent ii) any solution $\widetilde \ssigma$ to the Michell problem 
	for ${\cal F}$ satisfies $\widetilde \ssigma \geq 0$.
\end{theorem}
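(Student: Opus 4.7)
The key idea is that the linear displacement field $\widetilde{\Bu}(\Bx) = \Bx$ is a universal witness for both parts. It is admissible for the dual Michell problem because $(\widetilde{\Bu}(\Bx)-\widetilde{\Bu}(\By))\cdot(\Bx-\By) = \|\Bx-\By\|^2$ saturates the dual constraint with equality, and its symmetric gradient is the identity: $\Be(\widetilde{\Bu}) = \mathbf{I}$. Combined with the pointwise matrix inequality
\[ \rho_0(M)=|\lambda_1(M)|+|\lambda_2(M)| \;\geq\; \lambda_1(M)+\lambda_2(M)=\Tr(M), \]
which becomes an equality if and only if $M \geq 0$, this single choice of test field will drive both halves of the argument.

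For part (i), the plan is to apply Green's formula to $\Div\ssigma + \CF = 0$ tested against $\widetilde{\Bu}(\Bx)=\Bx$. By \refT{Thsupport} the measure $\ssigma$ has compact support, so the pairing is well defined and yields $\langle\CF,\widetilde{\Bu}\rangle = \langle\ssigma,\mathbf{I}\rangle = \int \Tr(\ssigma)$. Since $\ssigma\geq 0$, we have $\rho_0(\ssigma) = \Tr(\ssigma)$, hence $\int\rho_0(\ssigma) = \langle\CF,\widetilde{\Bu}\rangle$; this is exactly the extremality relation \eqref{Sopti} for the admissible primal-dual pair $(\widetilde{\Bu},\ssigma)$, so $\ssigma$ is a Michell minimizer. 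For part (ii), let $\widetilde\ssigma$ be any Michell minimizer. Integrating $\Div\widetilde\ssigma+\CF=0$ by parts against the same $\widetilde{\Bu}(\Bx)=\Bx$ gives $\int\Tr(\widetilde\ssigma) = \langle\CF,\widetilde{\Bu}\rangle$, a quantity depending only on $\CF$. By (i), this common value equals $\int\rho_0(\ssigma)$, and by optimality $\int\rho_0(\widetilde\ssigma)=\int\rho_0(\ssigma)$. Chaining the equalities produces
\[ \int \rho_0(\widetilde\ssigma) \;=\; \int \Tr(\widetilde\ssigma), \]
and since $\rho_0 \geq \Tr$ pointwise, equality of integrals forces $\rho_0(\widetilde\ssigma)=\Tr(\widetilde\ssigma)$ almost everywhere with respect to $|\widetilde\ssigma|$, which is precisely the condition $\widetilde\ssigma \geq 0$.

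The main technical obstacle lies in justifying the integration by parts $\langle \Div\widetilde\ssigma,\widetilde{\Bu}\rangle = -\langle\widetilde\ssigma,\Be(\widetilde{\Bu})\rangle$: the test field $\widetilde{\Bu}(\Bx)=\Bx$ grows linearly at infinity, and in part (ii) one cannot yet invoke \refT{Thsupport} to confine $\widetilde\ssigma$ to a compact set (since positivity is precisely what is to be proven). The standard remedy is to replace $\widetilde{\Bu}$ by $\chi_R(\Bx)\,\Bx$, where $\chi_R$ is a smooth cutoff equal to one on a large ball containing $\mathrm{supp}(\CF)$, apply the formula to the truncated field, and pass to the limit $R\to\infty$ using the admissibility bound $\int\rho_0(\widetilde\ssigma)<\infty$ (so that the outer contributions vanish). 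A mollification step in the spirit of the proof of \refT{mainthm1} handles any residual regularity concerns for the matrix-valued measure $\widetilde\ssigma$.
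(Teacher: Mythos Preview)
Your proof is correct and follows essentially the same route as the paper: both take $\widetilde{\Bu}(\Bx)=\Bx$ as the dual test field, use the integration-by-parts identity $\langle\CF,\Bx\rangle=\int\Tr(\widetilde\ssigma)$ valid for any admissible stress, combine it with $\rho_0(M)\geq\Tr(M)$ (equality iff $M\geq 0$), and read off both conclusions from the extremality relation \eqref{Sopti}. Your write-up is in fact slightly more careful than the paper's, which simply asserts the Green identity \eqref{S5} without discussing the cutoff needed in part (ii) where compact support of $\widetilde\ssigma$ is not yet available.
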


\begin{proof}  Clearly $\Bu(\Bx):=\Bx$ is admissible for the dual problem. 
	As $-\Div\ssigma={\cal F}$,  the following relation holds:
	\beq <{\cal F},\Bx>\ =\ <-\Div\ssigma,\Bx>\ =\ <\ssigma,\nabla \Bx>\ =\ \int \mathrm{Tr}(\ssigma). \eeq{S5}
	If we assume that  $\ssigma\geq 0$, then Michell's dual energy $\rho_0(\ssigma)$ coincides with $\mathrm{Tr}(\ssigma)$ as both $\lambda_1(\ssigma)$ and $\lambda_2(\ssigma)$ are non-negative. In particular, we have $\int \rho_0(\ssigma) =\int \mathrm{Tr}(\ssigma)$, hence  by \eqref{S5} the pair  $(\Bx,\ssigma)$  satisfies relation \eqref{Sopti}. The optimality of  $(\Bx,\ssigma)$  follows. This proves point i).
	
	To prove point (ii), it is enough to notice that, for any other solution $\widetilde \ssigma$ of the primal problem, the couple $(\Bx,\widetilde\ssigma)$  is optimal and thus must satisfy the optimality condition \eqref{Sopti}
	so that, by \eqref{S5}, one has 
	$$ <{\cal F},\Bx>=\int \mathrm{Tr}(\widetilde \ssigma)\,{\rm d}\Bx=\int \rho_0(\widetilde \ssigma)\,{\rm d}\Bx. $$
	As  $\rho_0(\widetilde \ssigma)\geq \mathrm{Tr}(\widetilde \ssigma)$,  we deduce that  $\rho_0(\widetilde \ssigma)= \mathrm{Tr}(\widetilde \ssigma)$ and so $\widetilde \ssigma\geq 0$.
\end{proof}

\section{Theorem \ref{mainthm1} in two dimensions}\label{Mechanical_inter}
To get a better insight on the mechanical interpretation of the statement of Theorem \ref{mainthm1}, we focus on the two-dimensional case which was also analyzed in \cite{Milton:2017:SFI}. In particular, we will show that when the points $\Bx_1,\Bx_2, \ldots, \Bx_N$ are vertices of a convex polygon, our condition \eq{W9} is a generalization of the condition \eqref{I1} proved in \cite{Milton:2017:SFI}.  

\subsection{Mechanical interpretation of the extreme rays of the cone ${\CA_{\BX}}$ }

Given a set of points $\Bx_1,\Bx_2, \ldots, \Bx_N$ that are vertices numbered clockwise of a convex polygon, let us consider the following displacement field $\Bu_k^{(j,i)}$ defined by:
\beq \Bu_k^{(j,i)} = \begin{cases}
	-\BR_\perp(\Bx_k-\Bx_j),\quad \text{ for }k=j,j+1,\ldots, i-1, \\
	0\quad\text{ otherwise },
\end{cases}                   
\eeq{E2.1}
where 
\beq \BR_\perp=\left[\begin{array}{cc}
	0& 1\\-1 &0	
\end{array} \right] \eeq{Rper}
is the matrix for a 90$^\circ$ clockwise rotation and, if necessary, we identify $k$ with $k-N$. We call $\Bu_k^{(j,i)}$ a ``clam-shell'' displacement (see \fig{3}) as it corresponds to the infinitesimal rotation between two non-overlapping subpolygons the polygon of terminal points can be divided into: by keeping fixed one of the two subpolygons, the rotation of the other opens the clam. Given any points $\Bx_k$ and $\Bx_\ell$ on opposite sides of the clam (where $\Bu_k^{(j,i)}\ne 0$, while
$\Bu_\ell^{(j,i)}=0$) we have
\beq (\Bu_k^{(j,i)}-\Bu_\ell^{(j,i)})\cdot(\Bx_k-\Bx_\ell)=-[\BR_\perp(\Bx_k-\Bx_j)]\cdot(\Bx_k-\Bx_\ell)\geq 0,
\eeq{E2.3}
where the last inequality follows from the convexity of the polygon and the clockwise numbering of the points. Thus, this clam{-}shell movement is
an admissible displacement as it satisfies \eqref{W10}. This implies that $\BF$ satisfies the constraints \eqref{W9}, that is,
\beq 0\leq \sum_{k=1}^N \Bf_k\cdot\Bu_k^{(j,i)}=\sum_{k=j}^{i-1}(\Bx_k-\Bx_j)\cdot[\BR_\perp\Bf_k],
\eeq{E2.4}
which are precisely the same as the constraints \eq{I1} that characterize ${\CA_{\BX}}^*$, that is the set of all the loadings $\BF$ at $\BX$ which can be supported by a finite web. Thus, in this case of the $\Bx_i$ forming the vertices
of a convex polygon, the displacements $\BU$ correspond precisely
(up to an infinitesimal rigid body motion) to these ``clam-shell'' movements, and do not include any other movements.

\begin{figure}[!ht]
	\centering
	\includegraphics[width=0.6\textwidth]{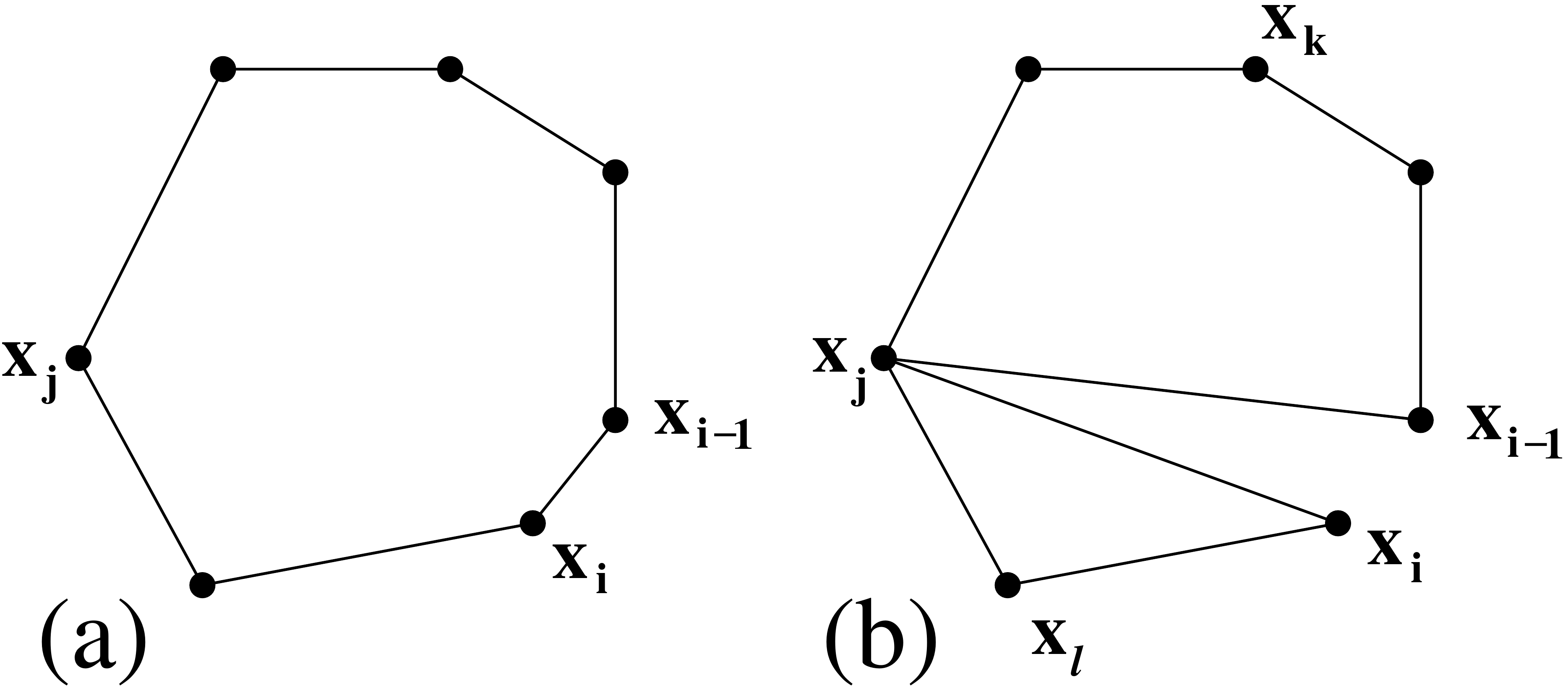}
	\caption{Given a set of points $\Bx_i$ that form the vertices of a convex polygon, as in (a), an extremal infinitesimal movement is {obtained} by breaking the polygon into two
		non-overlapping subpolygons connected at one vertex $\Bx_j$, as in (b). The ``clam-shell'' movement then consists of fixing one subpolygon, in this example the lower triangle,
		and infinitesimally rotating the other subpolygon anticlockwise about the point $\Bx_j$, so the ``clam'' opens slightly, thus moving any vertex $\Bx_k$ on the
		upper subpolygon away from any vertex $\Bx_\ell$ on the lower subpolygon.
	}
	\labfig{3}
\end{figure}

More generally, to check the criterion \eq{W9} it suffices to check it for those $\BU$ corresponding to the extreme rays $\BU^m$ of the cone ${\CA_{\BX}}$. These rays are perpendicular to the ``faces'' of the polar cone $-{\CA_{\BX}}^*$ (see Figure \fig{2}). 
\begin{figure}[!ht]
	\centering
	\includegraphics[width=0.6\textwidth]{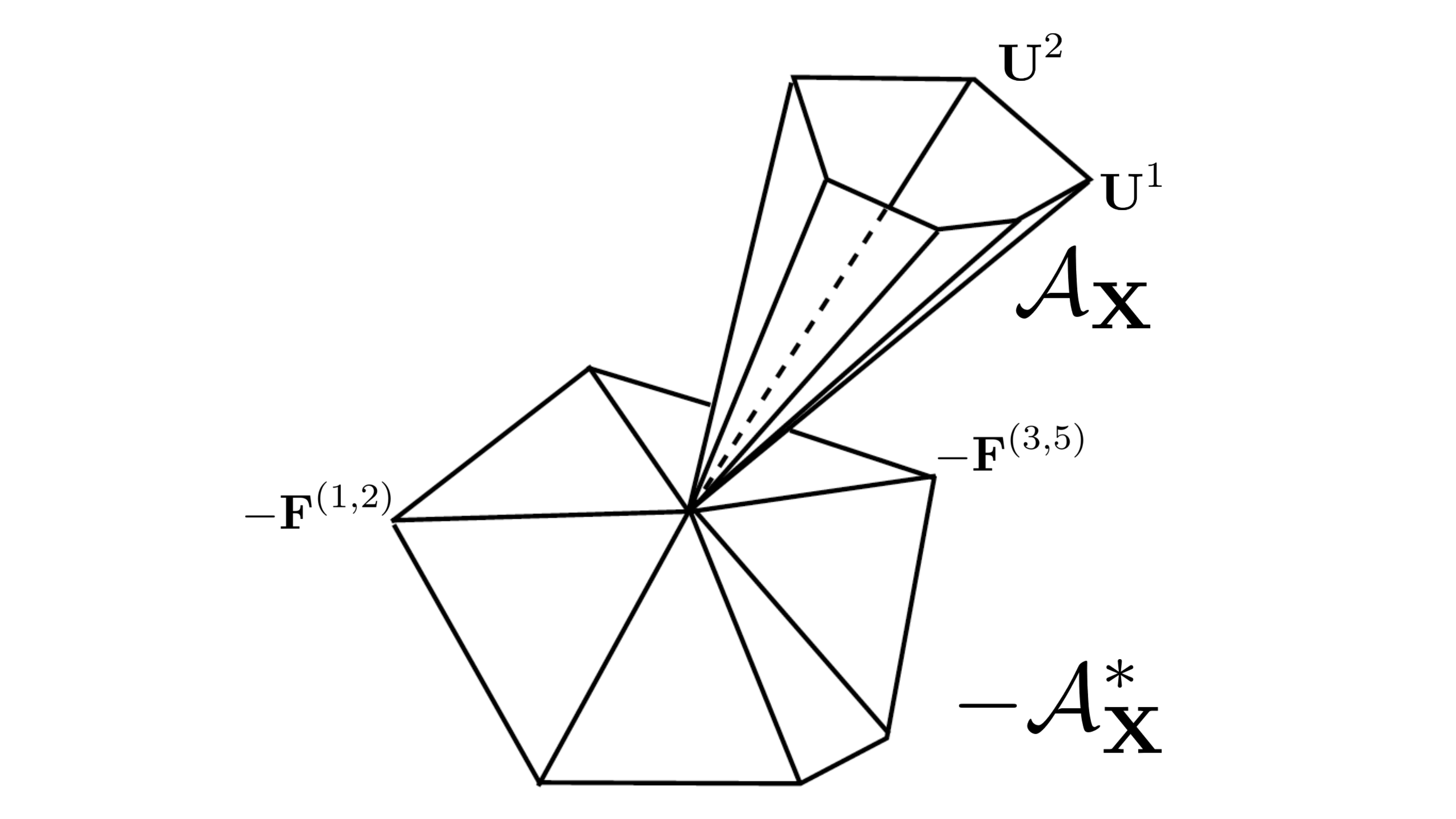}
	\caption{Schematic illustration of the cone $\CA_{\BX}$ and the polar cone $-\CA_{\BX}^*$, the negative of the dual cone $\CA_{\BX}^*$. Here the $\BF^{(i,j)}$ are the extreme rays of the dual cone $\CA_{\BX}^*$, while the
		$\BU^m$ are the extreme rays of the cone $\CA_{\BX}$.}
	\labfig{2}
\end{figure}
We use an integer $m=1,2,\ldots,M$ to index these rays. For any given
$m$ there exist associated loadings $\BF^m_h$, $h=1,2,\ldots,D-1$ all perpendicular to the extreme ray indexed by $m$. Here, for a given $m$, each value of $h$ signifies a
pair $(i,j)=(i(h,m),j(h,m))$ such that $\BF^{m}_h=\BF^{(i,j)}$, and linear combinations of the $\BF^m_h$, $h=1,2,\ldots,D-1$ with positive weights generate the ``face'' perpendicular to
the extreme ray of the cone ${\CA_{\BX}}$. Let $\BU^m=(\Bu_1^m,\Bu_2^m,\ldots,\Bu_N^m)$ be on this extreme ray. Then, we have
\beq \BF^{m}_h\cdot\BU^m=0,\text{ for }h=1,2,\ldots,D-1.
\eeq{RT5}
In particular, if $\BF^{m}_h=\BF^{(i(h,m),j(h,m))}$, then the orthogonality implies that
\beq (\Bu_{i(h,m)}^m-\Bu_{j(h,m)}^m)\cdot(\Bx_{i(h,m)}-\Bx_{j(h,m)})=0.
\eeq{RT6}
If we think of $\BU^m$ as corresponding to a displacement, then this restriction says that (within the infinitesimal displacements framework) there is no change in distance between
$\Bx_{i(h,m)}$ and $\Bx_{j(h,m)}$: the constraint is equivalent to only allowing those deformations compatible with rigid rods joining the pairs of points
$(\Bx_{i(h,m)},\Bx_{j(h,m)})$ for $h=1,2,\ldots,D-1$.
After eliminating the trivial infinitesimal rigid body motions (translations and rotations) from $\BU^m$, by requiring it to satisfy, for instance,
\beq\sum_{i=1}^N\Bu_i=\B0\,,\quad \sum_{i=1}^N\Bx_i\cdot\BA\Bu_i=\B0\eeq{AL3}
with $\BA$ any $d\times d$ antisymmetric matrix, there still must be one degree of freedom associated with the infinitesimal motion corresponding to $\BU^m$.
The equality $(\Bu_i-\Bu_j)\cdot(\Bx_i-\Bx_j)=0$ cannot hold for all $i\ne j$ as this then would correspond to a trivial overall (infinitesimal) rigid body motion,
which must be zero by \eqref{AL3}.  To fix the one-degree of freedom, we can impose the normalization condition that $\BU^m\cdot\BX=1$. Without loss
of generality this can be seen by noting that for any $\BU\in\CA_{\BX}$,
\beq 0\leq \sum_{i,j=1}^N(\Bu_i-\Bu_j)\cdot(\Bx_i-\Bx_j)=\sum_{i,j=1}^N\Bu_i\cdot\Bu_i+\Bu_j\Bu_j-2\sum_{i=1}^N\Bu_i\sum_{j=1}^N\Bu_j=2\BU\cdot\BX,
\eeq{RT7}
where, to get the last equality, we use \eq{AL3}.
Furthermore, $\BU\cdot\BX=0$ implies that $\BU$ is a rigid body motion, which we excluded. So $\BU\cdot\BX>0$, and by replacing $\BU$
with $\BU/(\BU\cdot\BX)$ we see that we can assume $\BU\cdot\BX=1$.

To conclude, we showed that if the terminal points are the vertices of a convex polygon, then all the extremal displacements correspond to clam-shell movements but notice that if there is at least one terminal point inside the convex hull of the terminal points, then besides clam-shell movements there are also other types of displacements, as shown in Figure \fig{Clam_shell_Arrowhead}.
\begin{figure}[h]
	\centering
	\begin{subfigure}{.45\textwidth}
		\centering
		\includegraphics[width=\textwidth]{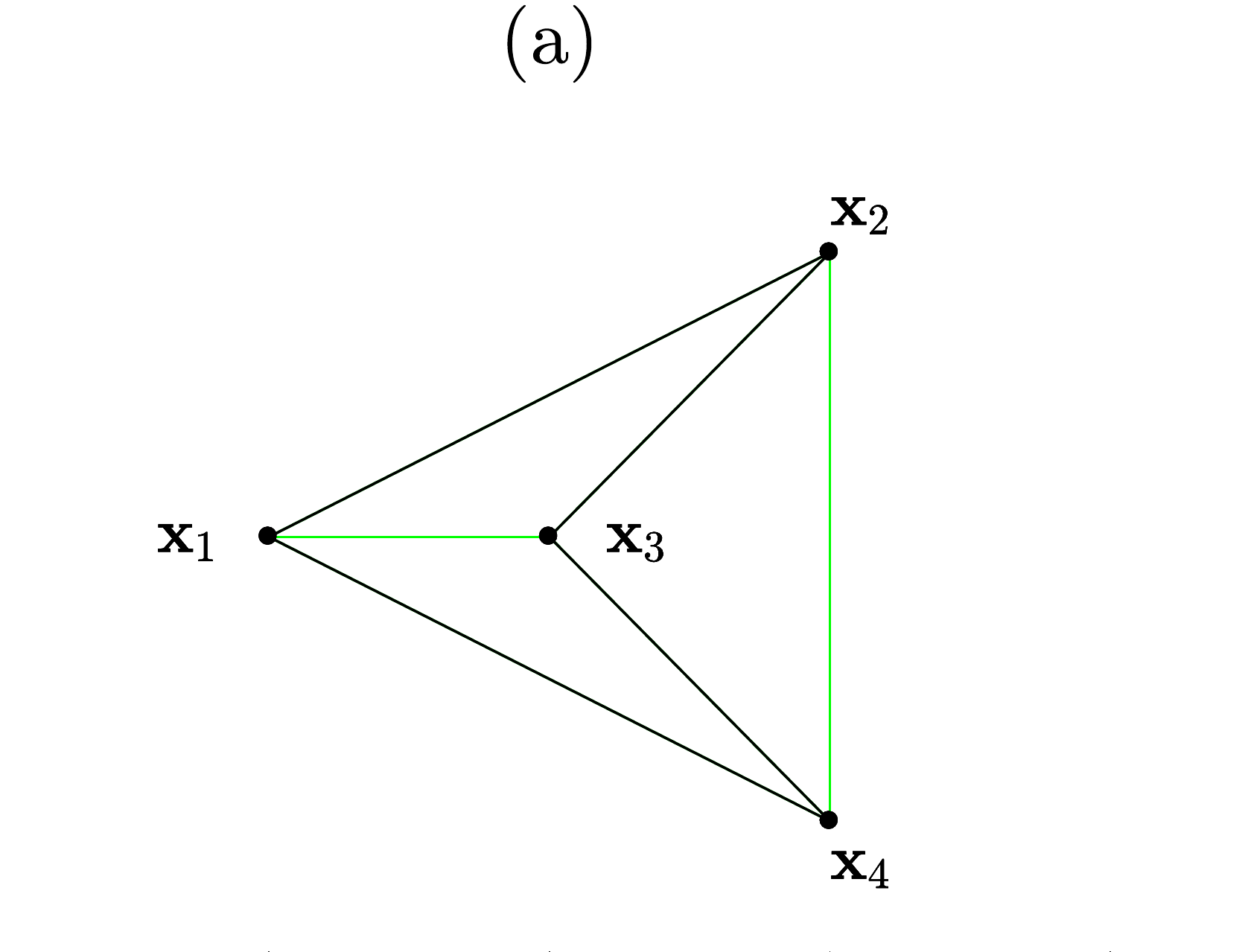}
	\end{subfigure}%
	\begin{subfigure}{.45\textwidth}
		\centering
		\includegraphics[width=\textwidth]{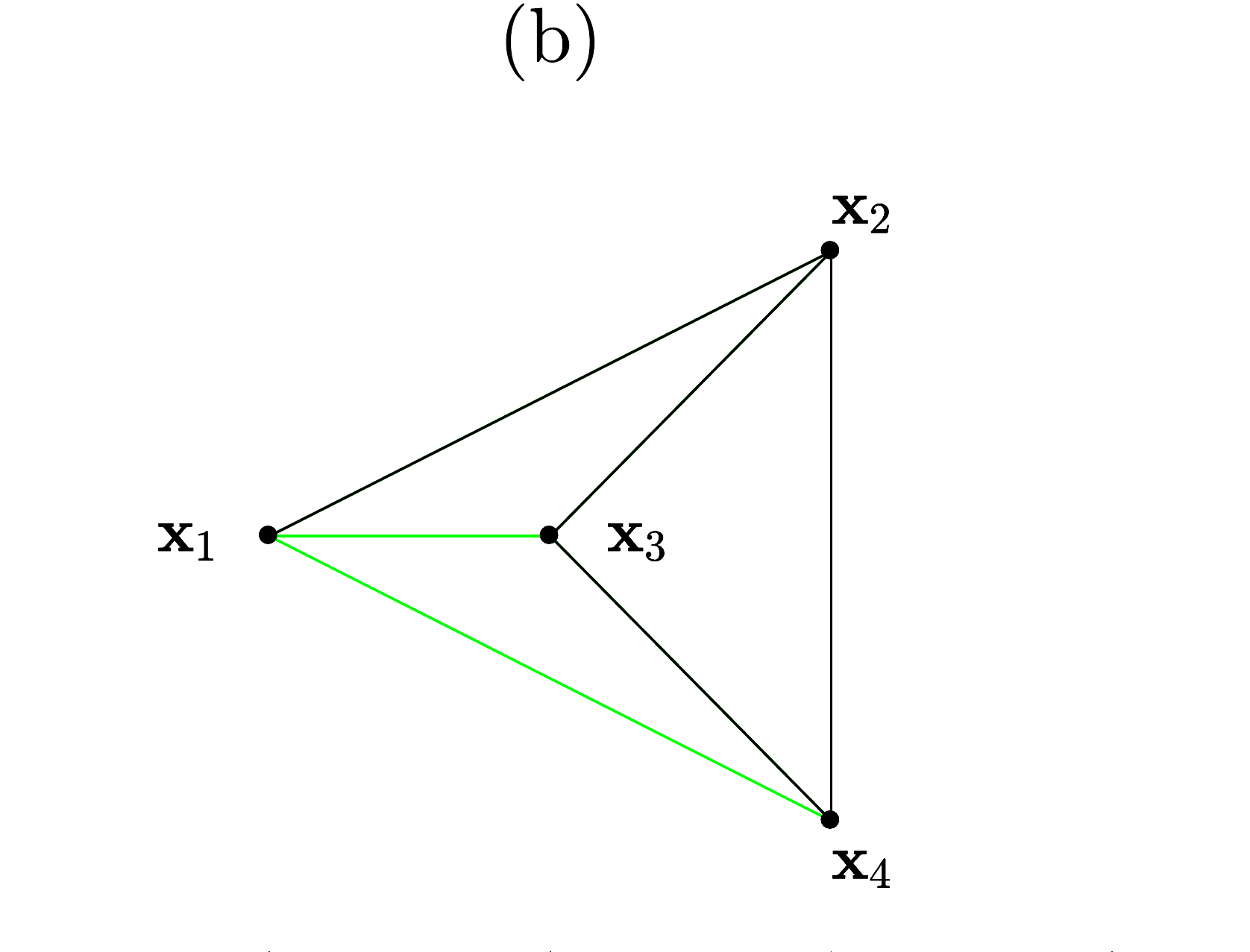}
	\end{subfigure}
	\caption{Consider the following 4 points which form an arrowhead in the plane: ${\bf x_1}=[0;0]$, ${\bf x_2}=[1;1]$, ${\bf x_3}=[0.5;0]$, and ${\bf x_4}=[1;-1]$. The displacements corresponding to the extreme rays of the cone ${\CA_{\BX}}$ for this special geometry can be divided into two groups: clam-shell movements and non-clam-shell movements. In this figure, the black lines represents the rigid wires and the green lines represents the deformable bonds. Clearly, (a) is an example of a non-clam-shell displacement, whereas (b) is an example of a clam-shell displacement (the rigid wire connecting the points ${\bf x_1}$ and ${\bf x_2}$ can rotate infinitesimally about the point ${\bf x_2}$ while the rigid triangle formed by ${\bf x_2}$, ${\bf x_3}$, and ${\bf x_4}$ is held fixed). Notice that the existence of non-clam-shell movements is due to the fact that the point ${\bf x_3}$ does not belong to the convex hull of the terminal points.}
	\labfig{Clam_shell_Arrowhead}
\end{figure}

\subsection{Simplifying the two-dimensional web}

In two dimensions, in any web, we define a loop to be any polygon such that its edges are the wires of the web. A minimal loop
is one with no other wires inside the polygon.
Any web with all pairs of terminal points interconnected, as in Figure \fig{1} (b), can then be replaced by an equivalent one with at most $P$ minimal loops
where $P$ is the number of points $\Bx_1,\ldots,\Bx_N$ that lie inside the convex hull of these $N$-points.
To show this we first place internal nodes
where any pair of wires $(\Bx_i,\Bx_j)$ and $(\Bx_r,\Bx_s)$ cross. Then take any minimal loop in the network. The vertices of this loop may include a terminal point $\Bx_j$
so long as the net force acting on the loop at  $\Bx_j$ (including $\Bf_j$ and the forces acting on $\Bx_j$ due to the tension in the other wires outside the loop)
points outside the loop. As the wires are all under tension, the loop then is necessarily convex
and exerts forces $-\Bf'_m$ at the nodes numbered clockwise around the loop. These forces necessarily satisfy \eq{I1}, and the loop can be replaced
by an open web. The number of minimal loops in the web is thus reduced by one. This procedure can be continued until there are at most $P$ minimal loops, and
each of these loops is non-convex and thus its vertices include at least one terminal node $\Bx_i$ where the associated force $\Bf_j$ points inside the loop. { Video1 (see Supplementary Material) shows an example of a web whose wires connect all the terminal points pairwise (initial frame), and how to replace each closed loop with an open web so that the final frame represents an equivalent web in which there is only one minimal loop due to the presence of one point, ${\bf x_4}$, inside the convex hull of the terminal points.}

\section{Channeling the stresses in a web}\label{Channelling}

In this Section we address the problem of designing wire webs that can support one and only one loading, up to a positive multiplicative factor. Note that the stress is not distributed in a unique way as there are many networks, i.e. stress patterns, that work. 
In a given wire web, this is possible if one can determine the stress in each wire in a unique way up to an overall proportionality constant: clearly this happens if at each internal node only four non-coplanar wires for the three-dimensional case (or three non-collinear wires for the two-dimensional case) meet and at most three non-coplanar wires meet at any terminal node for the three-dimensional case (or two non-collinear wires for the two-dimensional case). Here we provide a procedure to achieve such a goal so that at each internal node the coordination number is four for the three-dimensional case, or three for the two-dimensional case, while only one wire is connected to each terminal node.

Then, it is important to uniquely determine the loading that a web can support. Let us call $\CC^W_{\BX}$ the set of all the loadings $\BF$ that the
web $W$ can support at $\BX$. Clearly $\CC^W_{\BX}$ is a convex cone. Indeed, if the web supports the loadings $\BF^1$ and $\BF^2$ with admissible stresses $\ssigma^1$ and $\ssigma^2$, respectively, then for any $\Gl_1\geq 0$ and $\Gl_2\geq 0$ it also supports the loading $\BF=\Gl_1\BF^1+\Gl_2\BF^2$ with the associated stress $\Gl_1\ssigma^1+\Gl_2\ssigma^2$. Also, by definition, $\CC^W_{\BX}$ must be a subset of the admissible loading cone $\CA_{\BX}^*$. 

Here we address the converse question: given a convex cone $\CC\subset\CA_{\BX}^*$ can one find a web $W$ such that $\CC^W_{\BX}=\CC$?
We first focus on the case where $\CC$ is reduced to a single ray and then look for  what we call uniloadable webs (that is webs which support only one loading $\BF$, up to a multiplicative constant). If $\CC$ is a ray in the interior of $\CA_{\BX}^*$, then we can prove the existence of a uniloadable web for any ray $\CC$. If $\CC$ does not belong to the interior of $\CA_{\BX}^*$, then the existence of a uniloadable web is not guaranteed. Finally, we will answer the question in case $\CC$ is not simply a ray but a convex cone. 
Specifically, we will give a positive answer in the following asymptotic sense: one can find a sequence of finite webs $W_n$ such that $\CC^{W_n}_{\BX}$ approaches $\CC$
as $n\to\infty$. For two-dimensional webs where the points $\BX$ are the vertices of a convex polygon, a similar question was addressed by
Theorem 2 in \cite{Milton:2017:SFI}, and the proof given here is similar.

\subsection{Reducing the number of wires meeting at a point}
In two dimensions the  procedure for replacing a junction with $M>3$ wires by a localized web in which at most three wires meet is straightforward and described in Section 3 of {\cite{Milton:2017:SFI}.}
Briefly, and as illustrated in Figure \fig{5}, one finds the associated Airy stress function in the neighborhood of the junction. This is a convex cone with flat faces with the
discontinuity in slope at the edges corresponding to the tension in the wires ( Figure \fig{5}(b)). By cleaving the top of this cone, creating a polygonal face ({ Figure \fig{5}(c)}), one obtains an associated web ({Figure \fig{5}(d)})
supporting the same loading as the original junction ({Figure \fig{5}(a)}), but with at most three wires meeting at every junction.

\begin{figure}[!ht]
	\centering
	\includegraphics[width=0.6\textwidth]{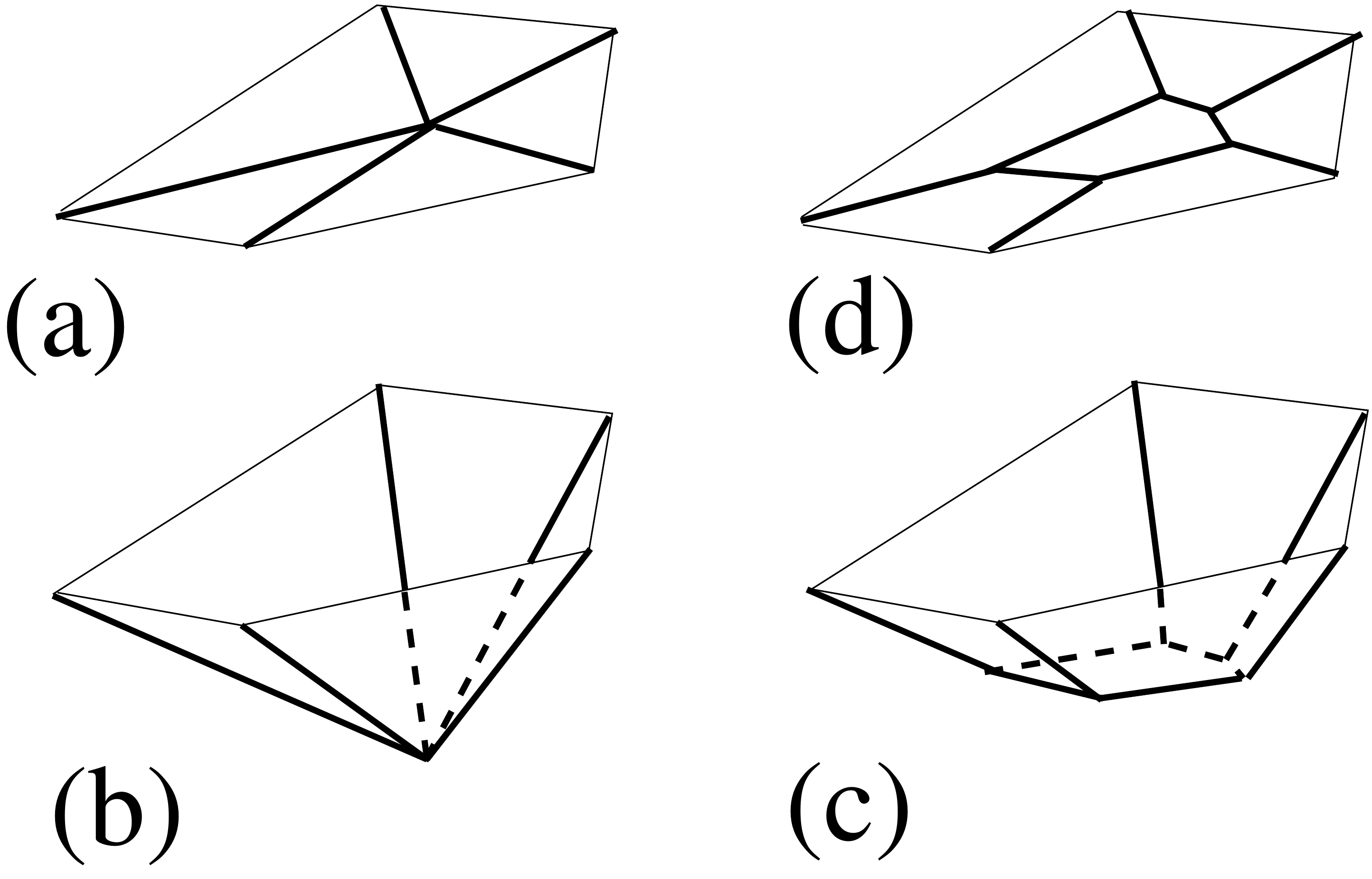}
	\caption{In two dimensions a junction of many wires at an internal node as in (a) can be replaced by a web in which at most three wires {meet} at every junction. First one determines an associated convex Airy
		function in the vicinity of the crossing, as in (b). Then one cleaves it by a plane as in (c), and the associated web, as in (d), then has  at most three wires meeting at every junction. }
	\labfig{5}
\end{figure}

In three dimensions the procedure we use for replacing a junction with $M>4$ wires is more complicated. The steps are illustrated in Figure \fig{6}, where we begin in (a) with a junction
where $M=6$ wires meet. First we pick 4 of the wires (those marked in blue in Figure \fig{6}(a)), and points $\Bx_1$, $\Bx_2$, $\Bx_3$ and $\Bx_4$
on the four wires such that the tetrahedron $T$ formed by $\Bx_1$, $\Bx_2$, $\Bx_3$ and $\Bx_4$ encloses the junction, which without loss of generality can be taken to be at the origin $\Bx_0=0$
({this} requires that the four wires be chosen so that they do not all lie on one side of any plane through the origin{: balance} of forces at the origin ensures that at least one choice
of such four wires exists). The tensions in these four wires generally do not balance. However, consider the ``tensegrity network'' consisting of rods from the origin to
the points $\Bx_1$, $\Bx_2$, $\Bx_3$ and $\Bx_4$ under compression, balanced by wires along the edges of the tetrahedron $T$ that are under tension. Example \ref{Ex1} { (see Section\ref{Introduction})} gives the
explicit solution for the compressive forces in the rods and the tensions in the wires in this ``tensegrity network''. We next superimpose this ``tensegrity network'' on our
junction with the tensions in the ``tensegrity network'' scaled so after superposition the tension near the junction in one wire cancels, while the tension in the other wires remain
nonnegative, as sketched in Figure \fig{6}(b). We thus obtain a web under tension where the number of wires joined to the origin is now $M-1$ or less. However, we typically have also created junctions, at
some of the points $\Bx_1$, $\Bx_2$, $\Bx_3$ and $\Bx_4$ where $5$ wires meet, like those in Figure \fig{6}(c).
These junctions are rather special in that one wire goes straight through the junction (but
typically has different tensions on opposite sides of the junction). These junctions are then locally replaced by networks in which at most four wires meet as illustrated in
Figure \fig{6}(d). Lemma \ref{5wires} below guarantees this can be done. The last step is to successively repeat the argument until the junction at the origin has at most 4 wires.
\begin{figure}[!ht]
	\centering
	\includegraphics[width=0.6\textwidth]{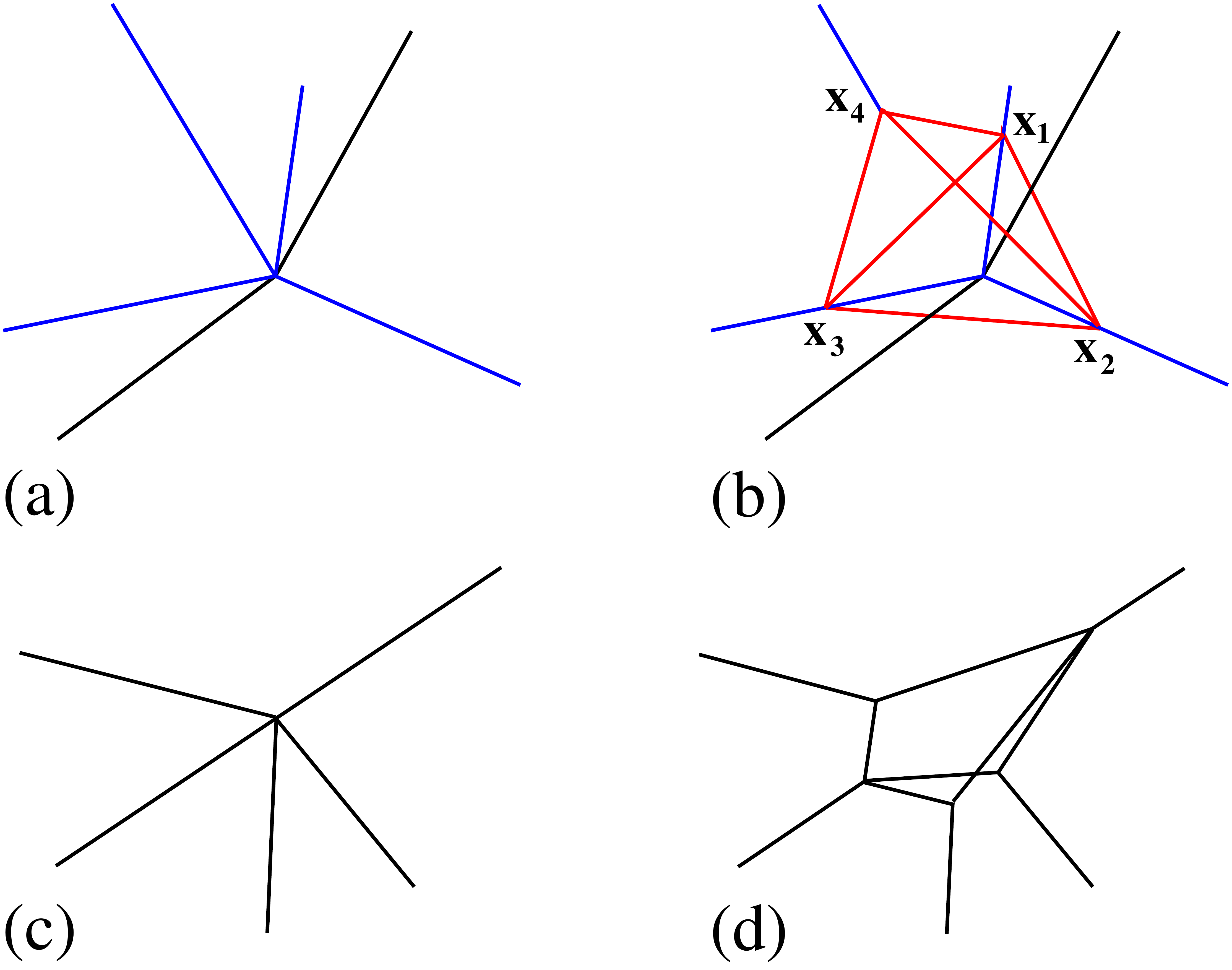}
	\caption{Steps in the replacement of a junction of many wires under given balanced tensions, with a network localized around the junction such that at most 4 wires meet at any junction in the new network, and the network still supports the same tensions in the wires meeting the network.}
	\labfig{6}
\end{figure}
\begin{lemma}\label{5wires}{\bf The five wires problem}
	\newline
	Consider five wires with tensions $T_i>0$, directions $\Bv_i$ and joining at the origin $\Bx={\bf 0}$. Assume that the three first directions are independent while $\Bv_5=- \Bv_4$. Then we can replace these wires by a web in tension such that at each of its nodes, no more than 4 wires are joining.
\end{lemma}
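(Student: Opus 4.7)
The key observation I would use is that $\Bv_5=-\Bv_4$ makes wires $4$ and $5$ share a single line through the origin, so they can be reassembled into a through-line along which the tension jump from $T_4$ to $T_5$ is absorbed at exactly one new internal node $Q$, while wires $1,2,3$ stay attached at the origin unchanged. By swapping the labels $4\leftrightarrow 5$ if necessary, I may assume $T_4\le T_5$, so the quantity $T_5-T_4$ is nonnegative.

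Writing the outer endpoints of the five original wires as $\Bx_i=L_i\Bv_i$ and choosing $s\in(0,L_5)$, I place the auxiliary node at $Q:=s\Bv_5$, lying strictly between the origin and $\Bx_5$. The replacement web I propose consists of six straight segments: the three unchanged wires $[\B0,\Bx_j]$ with tensions $T_j$ for $j=1,2,3$; a long segment $[\Bx_4,Q]$ with tension $T_4$, which geometrically crosses the origin although the origin is not declared a node of this segment; a short segment $[Q,\Bx_5]$ with tension $T_5$; and an auxiliary segment $[\B0,Q]$ parallel to $\Bv_5$ with tension $T_5-T_4\ge 0$. The segments $[\Bx_4,Q]$ and $[\B0,Q]$ then coincide geometrically on the sub-segment $[\B0,Q]$, which is harmless in the measure-theoretic formulation \eqref{WW}--\eqref{ssigma}, where the web stress is a sum of line measures that simply add on the overlap.

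The verification splits into three checks. First, the junction count: the origin is incident to exactly the four wires going to $\Bx_1,\Bx_2,\Bx_3,Q$, and $Q$ is incident to exactly the three wires going to $\Bx_4,\Bx_5,\B0$, so no interior node has valence greater than four. Second, the outer loading is preserved because each $\Bx_i$ is the endpoint of a single wire of the new web, pulling $\Bx_i$ in direction $-\Bv_i$ with tension $T_i$, which matches the original force $T_i\Bv_i$ needed at $\Bx_i$. Third, equilibrium at the internal nodes is a one-line calculation: at $Q$ the three incident wire forces give $T_4\Bv_4+T_5\Bv_5-(T_5-T_4)\Bv_5=T_4(\Bv_4+\Bv_5)=0$; at the origin, the hypothesis $\sum_{i=1}^{5}T_i\Bv_i=0$ gives $T_1\Bv_1+T_2\Bv_2+T_3\Bv_3=-(T_5-T_4)\Bv_5$, which cancels the pull of the new wire $[\B0,Q]$.

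The point I expect to need the most care is justifying that the geometric overlap of $[\Bx_4,Q]$ and $[\B0,Q]$ yields a bona fide web in the sense defined earlier. In the measure formulation this is immediate, since $\ssigma$ is a sum of positive semidefinite tensor-measures supported on the respective segments, the combined measure on the overlap is still positive semidefinite, and the equilibrium \eqref{eequi} is additive. The degenerate case $T_4=T_5$ requires no extra work: the auxiliary wire $[\B0,Q]$ carries zero tension and may be removed, leaving the single through-wire $[\Bx_4,\Bx_5]$ of constant tension $T_4$, disjoint from the tripod $\{[\B0,\Bx_1],[\B0,\Bx_2],[\B0,\Bx_3]\}$ at the origin.
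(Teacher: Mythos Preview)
Your construction does not actually reduce the junction at the origin; it only relabels it. The segment $[\Bx_4,Q]$ passes \emph{through} the origin, which is still a declared node, and on the $\Bx_4$ side it carries tension $T_4$ in direction $\Bv_4$. On the $Q$ side it overlaps with your auxiliary wire $[\B0,Q]$, so the total stress there is $T_4+(T_5-T_4)=T_5$ in direction $\Bv_5$. Hence the stress measure in any neighborhood of the origin is \emph{identical} to that of the original five-wire junction: tensions $T_1,T_2,T_3,T_4,T_5$ in the five directions $\Bv_1,\dots,\Bv_5$. Your count of ``four wires joining at the origin'' is obtained only by refusing to split the through-segment at the node it crosses; if that device were admissible the lemma would be vacuous, since at any five-wire junction one could always merge the two collinear wires into a single undeclared through-segment and claim valence four. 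In particular, for the lemma's intended use---building uniloadable webs in which balance of forces at each internal node determines all incident tensions once one is fixed---your web fails: the tension in $[\Bx_4,Q]$ is completely decoupled from the equilibrium equation you write at the origin.

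The paper's proof avoids this by \emph{removing} the origin altogether. It places nodes $\Bx_1,\Bx_2,\Bx_3$ on wires $1,2,3$ and nodes $\Bx_4,\Bx_5$ on the $\Bv_4$-line on opposite sides of the origin, then replaces the five half-lines inside by the six edges $[\Bx_i,\Bx_4]$, $[\Bx_i,\Bx_5]$, $i=1,2,3$, with explicit positive tensions. In the resulting web the origin is not a node at all; the new internal nodes $\Bx_4,\Bx_5$ have four incident wires and $\Bx_1,\Bx_2,\Bx_3$ have three, with no segment passing through any node and no overlapping segments. That genuine geometric splitting is the missing idea in your argument.
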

\begin{proof}
	We can assume without loss of generality that $T_5=\alpha T_4$ so that balance of forces implies that $T_1 \Bv_1 +T_2 \Bv_2+T_3 \Bv_3+(1-\alpha )T_4 \Bv_4=0$.
	Set $t>0$ and $s> \frac {\alpha}{3}$. Set also $\Bx_1=t T_1 \Bv_1$, $\Bx_2=t T_2 \Bv_2$, $\Bx_3=t T_3 \Bv_3$, $\Bx_4=t\, s T_4 \Bv_4$ and $\Bx_5=-t\, r T_4 \Bv_4$ where $r:=\frac{s}{3s-\alpha}$. As the real parameter $t$ can be arbitrarily chosen provided it is small enough, then we can avoid creating new nodes where a node already exists or where a wire lies. The parts of the wires which lie between the points $\Bx_i$ and the origin are replaced by six wires $[\Bx_i,\Bx_4]$, $[\Bx_i,\Bx_5]$ ($i\in\{1,2,3\}$). When these wires have respective (positive) tensions
	\beq T_{i4}=\frac{r}{t(r+s)}{\|\Bx_i-\Bx_4 \|},\qquad T_{i5}=\frac{s}{t(r+s)}{\|\Bx_i-\Bx_5 \|}, \eeq{5W.1}
	the web is in equilibrium. Indeed at each node $\Bx_i$ for $i\in\{1,2,3\}$ we have
	\beq
	T_{i4}  \frac{\Bx_4-\Bx_i}{\|\Bx_i-\Bx_4 \|} + T_{i5} \frac{\Bx_5-\Bx_i}{\|\Bx_i-\Bx_5 \|} +  T_i \Bv_i
	= \frac{rs-sr}{t(r+s)}t T_4 \Bv_4  + \frac{r+s}{t(r+s)}(-t T_i \Bv_i) +  T_i \Bv_i =0.
	\eeq{5W.2}
	Moreover, at  nodes $\Bx_4$ and $\Bx_5$ we have
	\beq \sum_{i=1}^3 \Big( T_{i4}  \frac{\Bx_i-\Bx_4}{\|\Bx_i-\Bx_4 \|}\Big)  +T_4 \Bv_4
	= \frac{1}{r+s}  \Big( r\alpha +s- 3rs\Big) T_4 \Bv_4 =0.
	\eeq{5W.3}
	\beqa \sum_{i=1}^3 \Big( T_{i5}  \frac{\Bx_i-\Bx_5}{\|\Bx_i-\Bx_5 \|}\Big)  +T_5 \Bv_5
= \frac{1}{r+s}  \Big( - s-\alpha r  + 3rs\Big) T_4 \Bv_4 =0.
	\eeqa{5W.4}
\end{proof}


\subsection{Uniloadable webs}\label{Uniloadable_webs}

Given a ray $\CC\subset\CA_{\BX}^*$, we want to determine whether there exists a uniloadable web which can support the corresponding loading $\lambda\BF$, with $\lambda\geq 0$. In case $\CC$ does not belong to the interior of the cone $\CA_{\BX}^*$, then the existence of a uniloadable web is not guaranteed. When $\BF$ belongs to the interior of $\CA_{\BX}^*$, then we prove the existence of a uniloadable web supporting such a loading (up to a multiplicative constant).

\subsubsection{Stuck loadings}\label{secnonstuck}

Let us start with the following remark : let $\BF=(\Bf_1,\dots,\Bf_N)$ belong to  $\CA_{\BX}^*$ with $\BX=(\Bx_1,\dots,\Bx_N)$. Then, it also belongs, for any  $t \geq 0$, to the admissible loading cone $\CA_{\BX+t\BF}^*$ of the shifted points $(\Bx_1+t\Bf_1,\dots, \Bx_N+t\Bf_N)$. Indeed, taking a web  $W$  which supports $\BF$ at $\BX$ and adding to $W$ all the wires $[\Bx_i,\widetilde \Bx_i]$ leads clearly to a web supporting $\BF$ at $\BX+t\BF$. When $t<0$ things are less clear. We say that $\BF$ at $\BX$ is an {\it unstuck  loading} if there exists $\varepsilon>0$ such that
$$\exists \varepsilon >0,\  \forall t<\varepsilon,\ \BF\in \CA_{\BX-t\BF}^*,$$
otherwise we say that  $\BF$ is a {\it stuck loading}. A particular case of stuck loadings, referred to as {\it completely stuck loadings}, is  when there exist some $k\in\{1,\dots,N\}$ and $\varepsilon >0$ such that,
$$\forall\, 0<t<\varepsilon, \  \BF\not \in \CA_{\widetilde \BX}^* \ \text{  with }\  \widetilde \BX=(\Bx_1,\dots,\Bx_{k-1},\Bx_k-t\Bf_k,\Bx_{k+1},\dots,\Bx_N).$$

The stuck or completely stuck conditions can only occur when the loading $\BF$ is a ray on the boundary of the cone of admissible loadings $\CA_{\BX}^*$, as we will prove in the next subsection that any $\BF$ in the interior of $\CA_{\BX}^*$ is unstuck. For a better insight, let us consider two examples of completely stuck loadings.

\begin{Aexa}\label{Example3}
	Consider forces {$\Bf_1=[-1;0]$, $\Bf_2=[3/4;1]$, $\Bf_3=[3/4;-1]$, and $\Bf_4=[-1/2;0]$ at the four points considered in Figure \fig{Clam_shell_Arrowhead}, that is, ${\bf x_1}=[0;0]$, ${\bf x_2}=[1;1]$, ${\bf x_3}=[1;-1]$, and ${\bf x_4}=[1/2;0]$}. They are supported by the web in Figure \fig{4}(a).
	
	\begin{figure}[!ht]
		\centering
		\includegraphics[width=0.7\textwidth]{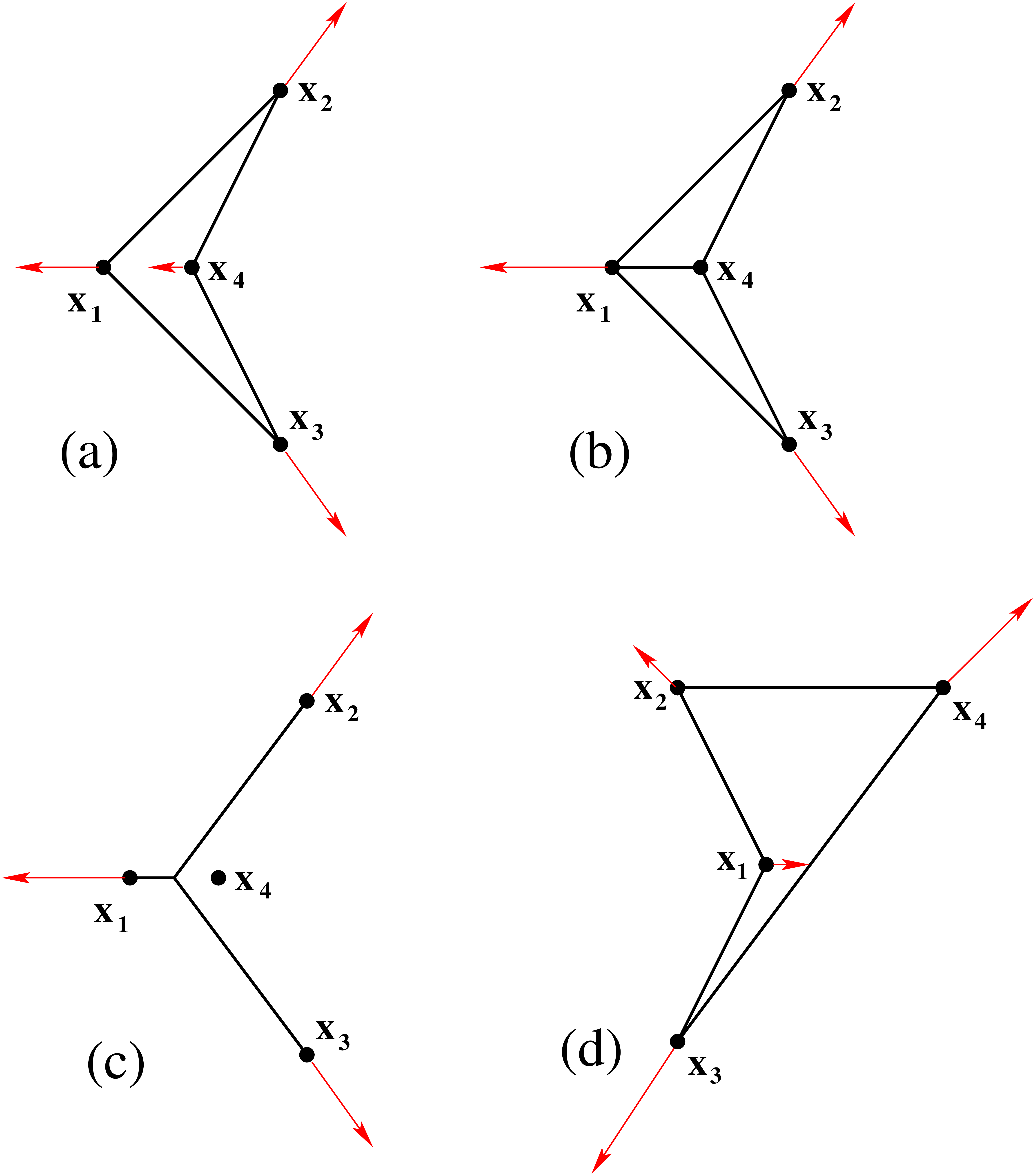}
		\caption{Examples of webs with "completely stuck loadings" in two dimensions. We argue in Example \ref{Example3} that (a) is the unique web that supports the given forces at the terminal nodes. A wire can be attached to the node
			$\Bx_4$, as in (b), in order to uniquely define the associated Airy stress function, up to the addition of an affine function. This Airy stress function then lies above the triangular pyramidal Airy stress function associated
			with the net forces applied  at the three points $\Bx_1$, $\Bx_2$, and $\Bx_3$, as in (c). At the same time it must have a discontinuity in slope across the line between $\Bx_1$ and $\Bx_4$ corresponding to the tension in the
			wire joining those two points. This then uniquely determines the Airy stress function (modulo the addition of an affine function) and thus uniquely determines the web. Figure (d), as discussed in Example \ref{Example4},
			provides another example of a web with a "completely stuck loading" where the web is uniquely determined once the loading is specified.}
		\labfig{4}
	\end{figure}
	
	Our objective is to show that this is the only web supporting these forces, thus implying that the loading is
	``completely stuck''. First move the force {$\Bf_4=[-1/2;0]$} back to the origin, by attaching a wire joining the origin to $\Bx_4$, as in Figure \fig{4}(b). Then
	the net force acting on the origin is $[-3/2;0]$ and is balanced by the forces $\Bf_3$ and $\Bf_4$. The open web in Figure \fig{4}(c) supports these three forces, and the associated Airy
	function (up to addition of an affine function, see also \cite{Fraternali:2014:OTC}) is
	\beq \Gf_L(\Bx)=\max\{-3|x_2|/4,1/4-x_1 \},
	\eeq{SW1}
	where $x_1$ and $x_2$ are the coordinates of 
$\Bx$.
	
	Now suppose we have any web supporting the four forces $\Bf_1$, $\Bf_2$, $\Bf_3$, and $\Bf_4$. Necessarily this web will be confined within the convex hull of the three points
	$\Bx_1,\Bx_2,\Bx_3$. To define the associated Airy stress potential, in say all of $\R^2$, we need to move the four forces
	to infinity by attaching {infinite} wires to the four points in the direction of infinity (the wire attached to $\Bx_1$ overlaps the wire attached to $\Bx_4$ left of the origin).
	The existence of these wires implies a discontinuity of slope of the Airy
	stress function across them, matching the tension in the wire. Now consider the Airy stress function in the vicinity of the point $\Bx_4$. In particular take the
	tangent plane at a point $\Bx_0$ that approaches $\Bx_4$ from the left remaining infinitesimally above the wire that joins $\Bx_4$ to the origin. Similarly
	take the tangent plane at a point $\Bx_0'$ that approaches $\Bx_4$ from the left remaining infinitesimally below the wire that joins $\Bx_4$ to the origin. The maximum of these
	two tangent planes is the valley function $\Gf_V$ that takes the form
	\beq \Gf_V(\Bx)=-|x_2|/4+ax_1+bx_2+c.
	\eeq{SW2}
	As the web is confined to the convex hull of $\Bx_1,\Bx_2,\Bx_3$, the Airy stress function outside this convex hull can be taken to be $\Gf_L$ (modulo an affine function
	that can be set to zero without loss of generality). Convexity of the Airy stress function then implies the inequalities $ \Gf_L(\Bx_i)\geq\Gf_V(\Bx_i)$ for $i=1,2,3$
	and $\Gf_L(\Bx_4)\leq\Gf_V(\Bx_4)$. Elementary calculations then show that these inequalities allow one no freedom in the choice of $a$, $b$ and $c$ and one necessarily
	has $a=-1/2$, $b=c=0$. By convexity, the Airy stress function of any web supporting the four forces must be above
	\beq \Gf(\Bx)=\max\{\Gf_L(\Bx),\Gf_V(\Bx)\}=\max\{-3|x_2|/4,(1/4)-x_1,-|x_2|/4-x_1/2\}, \eeq{SW3}
	and must coincide with it at the four points $\Bx_1,\ldots,\Bx_4$. But the polyhedral nature of $\Gf(\Bx)$ means that any other candidate convex Airy stress function
	must cleave it {in} the vicinity of $\Bx_4$, which is forbidden. Thus (modulo the addition of an affine function) $\Gf(\Bx)$ given by \eq{SW3} is the unique possibility
	for the  Airy stress function, and the web in Figure \fig{4}(a), is the only web that can support the four forces at the four points.
\end{Aexa}

\begin{Aexa}\label{Example4}
	A second example of a web with completely stuck loading is shown in Figure \fig{4}(d). Forces {$\Bf_1=[2;0]$, $\Bf_2=[-2;2]$, $\Bf_3=[-4;-6]$, and
		$\Bf_4=[4;4]$ are applied at the points $\Bx_1=[0;0]$, $\Bx_2=[-1/2;1]$, $\Bx_3=[-1/2;1]$, and $\Bx_4=[1;1]$}. The unique web that supports
	them is that drawn in Figure \fig{4}(d), and the associated Airy stress potential (modulo addition of an affine function) is
	\beq \Gf(\Bx)=\max\{2x_1,-2x_2+1,-|x_2|-(4x_1-1-3x_2)^+\}, \eeq{SW4}
	where $q^+=\max\{0,q\}$. The proof proceeds similarly to the first example. The completely stuck nature of the webs in our two examples has been verified numerically.
\end{Aexa}

In two dimensions when the terminal points $\BX$ are at the vertices of a convex polygon, the existence of an open web supporting (the assumed admissible) loading $\BF$ implies
that such webs are never "completely stuck" with respect to the loading $\BF$. One may wonder: is a similar result true in three dimensions? The numerical example of {Figure \fig{cube}}
shows, to the contrary, that there exist terminal points $\BX$ at the vertices of a convex polyhedron, and an admissible loading $\BF$ such that the associated web is "completely stuck".
This example was found by starting with $N=8$ terminal nodes at the vertices of a cube, taking an admissible loading $\BF$ supported at these points, then moving the terminal points backwards
so that for each $j=1,2,\ldots, 8$, $\Bx_j$ is replaced  by $\Bx_j'=\Bx_j-\e_j\Bf_j$, where the $\e_j\geq 0$ are increased until the loading $\BF$ at the terminals $\BX'=(\Bx'_1,\Bx_2',\ldots,\Bx'_8)$
is completely stuck, while keeping the terminals $\BX'$ as vertices of a convex polyhedron.

\begin{figure}[h!]
	\centering
	\begin{subfigure}{.5\textwidth}
		\centering
		\includegraphics[width=\textwidth]{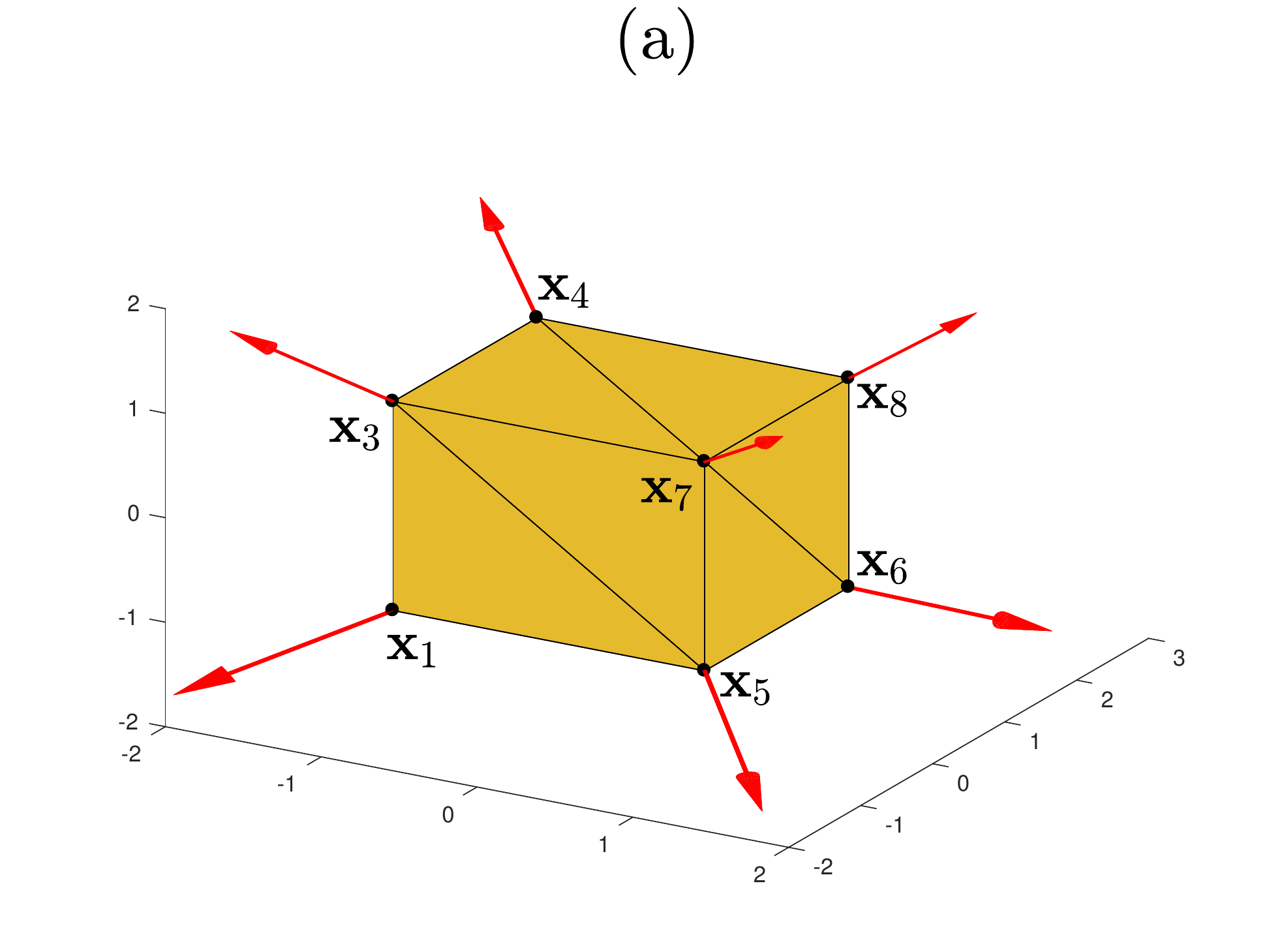}
	\end{subfigure}%
	\begin{subfigure}{.5\textwidth}
		\centering
		\includegraphics[width=\textwidth]{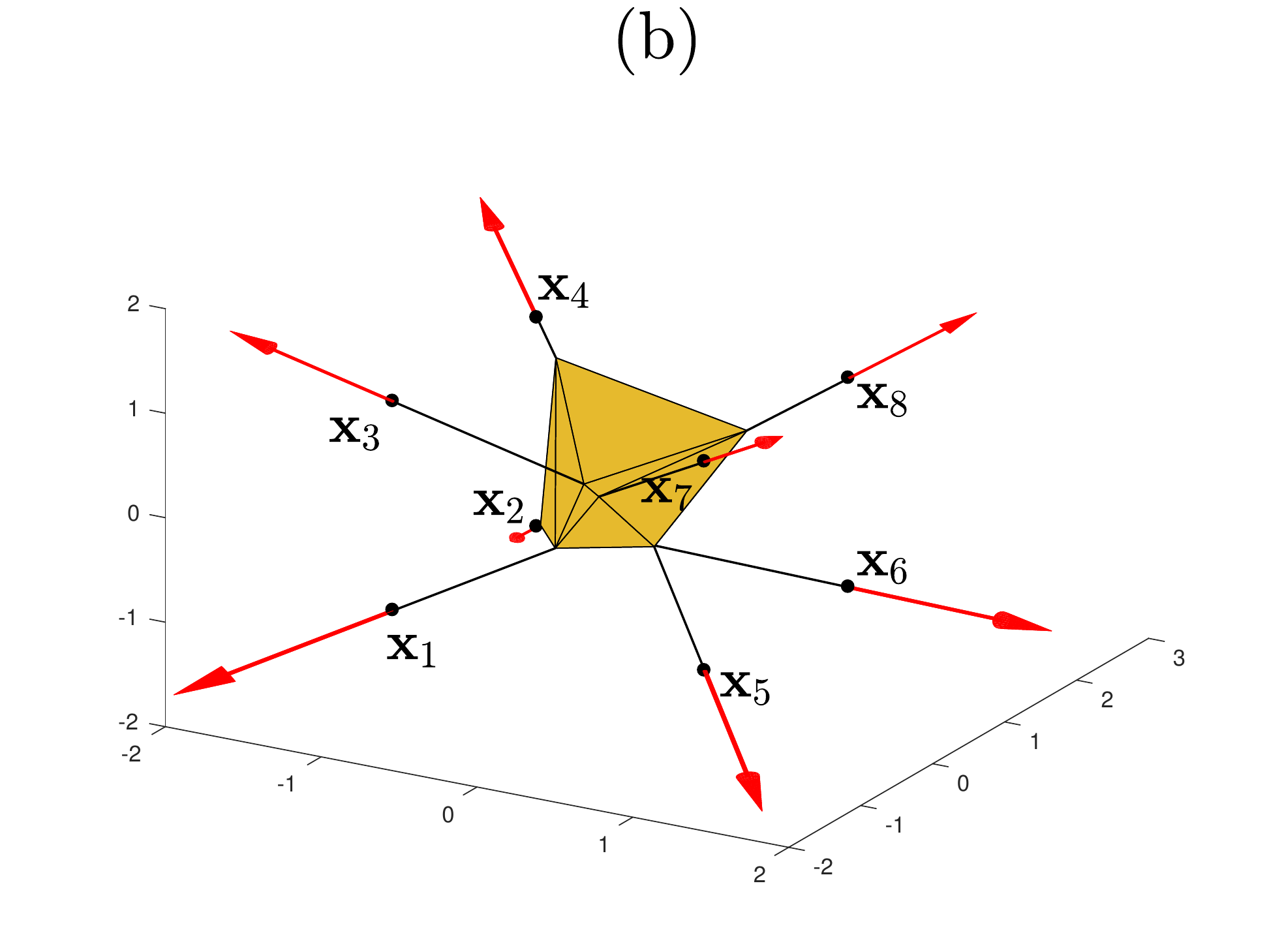}
	\end{subfigure}
	\caption{We start by applying the forces ${\Bf}_i$, given in Table \ref{table:1}, at the 8 points shown in (a), whose coordinates are given by ${\Bx}_i$ in Table \ref{table:1}. The points are initially chosen to be the vertices of a cube. Now, move the terminal points backwards
		so that for each $i=1,2,\ldots, 8$, $\Bx_i$ is replaced  by $\Bx_i'=\Bx_i-\e_i\Bf_i$, where the $\e_i\geq 0$ are increased until the loadings ${\Bf}_i$ at the terminals $\Bx_i'$, provided in Table \ref{table:1},
		is completely stuck, while keeping $\Bx_i'$ as vertices of a convex polyhedron. The diagonal lines along the faces in figure (a) are numerical artifacts.}
	\labfig{cube}
\end{figure}
\begin{table}[h!]
	\centering
	\begin{tabular}{c |c |c |c}
		\hline
		$i$ & ${\Bx}_i$ & ${\Bx}_i'$ & ${\Bf}_i$ \\
		\hline
		1 & [-1;-1;-1] & [-0.27473;-0.31827;-0.46693] & [-0.98151;-0.92259;-0.72140] \\
		2 & [-1;1;-1] & [-0.94086;0.91949;-0.93330] & [-0.46129;0.62796;-0.52022]\\
		3 & [-1;-1;1] & [-0.12837;-0.23757;0.15689] & [-0.74581;-0.65237;0.72140] \\
		4 & [-1;1;1] & [-0.76226;0.74617;0.78972] & [-0.72140;0.77022;0.63807] \\
		5 & [1;-1;-1] & [0.28777;-0.16187;-0.35258] & [0.80474;-0.94700;-0.73151] \\
		6 & [1;1;-1] & [0.27806;-0.12953;-0.35765] & [0.75592;1.1827;-0.67259] \\
		7 & [1;-1;1] & [-0.01231;-0.28017;0.08708] & [0.74581;-0.53033;0.67259] \\
		8 & [1;1;1] & [0.51981;0.62495;0.51177] & [0.60355;0.47140;0.61366] \\
		\hline
	\end{tabular}
	\caption{Components of the forces ${\Bf}_i$ applied at the 8 points shown in Figure \fig{cube} whose coordinates in the original configuration are given by ${\Bx}_i$ (Figure \fig{cube}(a)), and in the final configuration by ${\Bx}_i'$ (Figure \fig{cube}(b)).}
	\label{table:1}
\end{table}


\subsubsection{Unstuck loadings}

The aim is to prove that, for a given loading $\BF$ belonging to the interior of the one of admissible loadings ${\cal A}^*_{\BX}$, there exists a uniloadable web supporting only $\BF$ (up to a multiplicative constant). First, we need to prove that the cone ${\cal A}^*_{\BX}$ does not have an empty interior (Lemma \ref{nonempty}). Then, we need to prove that if we perturb slightly the positions of the terminal points, the loading is not stuck (Lemma \ref{lemnonstuck}) and that there exists a connected web with all the wires under tension which supports $\BF$. Finally, we can prove that such a web is a uniloadable web (Theorem \ref{mainthm2}).

To prove that the cone ${\cal A}^*_{\BX}$ does not have an empty interior
we need to introduce the space $\CR_{\BX}$ of infinitesimal rigid motions on $\BX=(\Bx_1,\dots,\Bx_N)$ and its orthogonal $\CB_{\BX}$, called the space of {\it balanced loadings} on $\BX$, defined by 
\beqa \CR_{\BX} &:= & \{ \BU=(\Bu_1,\dots, \Bu_N)\in (\R^d)^N:\exists a \in \R^d, \ \exists \BA \text{ antisymmetric such that}, \forall i\,\,\,  \Bu_i=\Ba+ \BA\Bx_i\}, \nonum
\CB_{\BX}& := & \{ \BF=(\Bf_1,\dots, \Bf_N)\in (\R^d)^N:\ \sum_{i=1}^N \Bf_i=0,\ \sum_{i=1}^N (\Bf_i\otimes \Bx_i - \Bx_i\otimes \Bf_i)=0\}.
\eeqa{NE.1}
There is no loss of generality to assume that $\sum_{i=1}^N \Bx_i=0$ so that $\BX$ also belongs to $\CB_{\BX}$.

Recall that ${\cal A}^*_{\BX}$, the cone of admissible forces $\BF=(\Bf_1,\dots,\Bf_N)$ at the points $\BX=(\Bx_1,\dots,\Bx_N)$, is the dual cone of  ${\cal A}_{\BX}:=\{U\in \CB_{\BX}:\ \forall (i,j),\ (\Bu_i-\Bu_j)\cdot (\Bx_i-\Bx_j)\geq 0\}$ (see Theorem \ref{mainthm1}). The set $\CB_{\BX}$ is a subspace of $(\R^d)^N$ (with codimension $d(d+1)/2$). The notion of interior we use is relative to this subspace.

\begin{lemma} \label{nonempty}
	The cone ${\cal A}^*_{\BX}$ is a subset of $\CB_{\BX}$ with non empty interior.
\end{lemma}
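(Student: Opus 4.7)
The plan is to establish the two claims separately --- first that $\CA^*_{\BX}\subset\CB_{\BX}$, and then that $\CA^*_{\BX}$ has nonempty interior relative to $\CB_{\BX}$ --- both by way of identifying the lineality space of $\CA_{\BX}$ (the largest linear subspace it contains) with the rigid-motion space $\CR_{\BX}$.

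For the first inclusion I would check that $\CR_{\BX}\subset\CA_{\BX}\cap(-\CA_{\BX})$: if $\Bu_i=\Ba+\BA\Bx_i$ with $\BA$ antisymmetric, then $(\Bu_i-\Bu_j)\cdot(\Bx_i-\Bx_j)=(\BA(\Bx_i-\Bx_j))\cdot(\Bx_i-\Bx_j)=0$, so both $\BU$ and $-\BU$ lie in $\CA_{\BX}$. Consequently every $\BF\in\CA^*_{\BX}$ must vanish on all of $\CR_{\BX}$. Writing the pairing as $\BF\cdot\BU=\Ba\cdot\sum_i\Bf_i+\mathrm{Tr}(\BA\sum_i\Bx_i\otimes\Bf_i)$ and letting $\Ba\in\R^d$ and the antisymmetric $\BA$ vary, one recovers exactly the two balance conditions that define $\CB_{\BX}$, which yields $\CA^*_{\BX}\subset\CB_{\BX}$.

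For the second claim I invoke the standard convex-analytic fact that the dual of a closed convex cone $C$ with lineality space $L$ has nonempty relative interior in $L^{\perp}$. It then suffices to establish the reverse inclusion $\CA_{\BX}\cap(-\CA_{\BX})\subset\CR_{\BX}$, i.e., that pairwise equalities $(\Bu_i-\Bu_j)\cdot(\Bx_i-\Bx_j)=0$ for all $i,j$ force $\BU$ to be an infinitesimal rigid motion. Using the reduction noted after Theorem~\ref{Thsupport}, I may assume that the $\Bx_i$ affinely span $\R^d$. Setting $\Bw_i:=\Bu_i-\Bu_1$ and $\By_i:=\Bx_i-\Bx_1$, the equalities for $(1,i)$ give $\Bw_i\cdot\By_i=0$, and combining these with the equality for $(i,j)$ and expanding yields the symmetric relation $\Bw_i\cdot\By_j+\Bw_j\cdot\By_i=0$ for all $i,j$.

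From this symmetric relation I would argue that the correspondence $\By_i\mapsto\Bw_i$ extends to a well-defined linear map $\BM:\R^d\to\R^d$. Given any dependence $\By_i=\sum_k\alpha_k\By_{i_k}$, the residual $\Bv:=\Bw_i-\sum_k\alpha_k\Bw_{i_k}$ satisfies $\Bv\cdot\By_j=-\Bw_j\cdot(\By_i-\sum_k\alpha_k\By_{i_k})=0$ for every $j$, and since $\{\By_j\}$ spans $\R^d$ this forces $\Bv=0$. Thus $\Bw_i=\BM\By_i$ for a single linear $\BM$, and the symmetric relation becomes $((\BM+\BM^T)\By_i)\cdot\By_j=0$ for all $i,j$, which by the same spanning property forces $\BM+\BM^T=0$. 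Setting $\Ba:=\Bu_1-\BM\Bx_1$ and $\BA:=\BM$ then exhibits $\Bu_i=\Ba+\BA\Bx_i\in\CR_{\BX}$, so the lineality space of $\CA_{\BX}$ equals $\CR_{\BX}$ and the lemma follows from the duality fact above. The main obstacle is precisely this rigidity step --- descending $\By_i\mapsto\Bw_i$ to a linear map on $\R^d$ --- but the algebraic manipulation outlined handles it cleanly as soon as the $\Bx_i$ are known to affinely span $\R^d$.
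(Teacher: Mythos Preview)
Your proof is correct and follows essentially the same route as the paper: both reduce the nonemptiness claim to the rigidity statement that $(\Bu_i-\Bu_j)\cdot(\Bx_i-\Bx_j)=0$ for all $i,j$ forces $\BU\in\CR_{\BX}$, under the standing assumption that the $\Bx_i$ affinely span $\R^d$. The only cosmetic difference is that you package this via the lineality-space/dual-cone identification whereas the paper runs a direct contradiction using the explicit loadings $\BF^{(i,j)}\in\CA^*_{\BX}$; your linear-algebra derivation of the antisymmetric map $\BM$ is in fact more detailed than the paper's one-line assertion of the same step.
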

\begin{proof}
	We have already noticed, as a consequence of Theorem \ref{Thsupport},  that there is no loss of generality in assuming that the points $(\Bx_1,\dots \Bx_{N})$  span the space $\R^d$. There is no loss of generality either in assuming that it is the first points $(\Bx_1,\dots \Bx_{d+1})$ which span it.
	
	Assume now, by contradiction, that the set  ${\cal A}^*_{\BX}$ has empty interior. As it is convex, this means that it is contained in a lower dimension subspace: there exists $\BU=(\Bu_1,\dots,\Bu_N)\not=0 \in \CB_{\BX}$ such that $\BF\cdot\BU =0$ for all $\BF\in {\cal A}^*_{\BX}$. As $(\Bx_1,\dots \Bx_{d+1})$ span $\R^d$, and there exists a unique affine function $\Bu$ such that $\Bu(\Bx_i)=\Bu_i$ for $1\leq i \leq d+1$.

	Clearly, for any $(i,j)\in\{1,\dots,N\}$, the particular loading $\BF^{(i,j)}=(\Bf_1,\dots,\Bf_N)${\red ,} defined by $\Bf_i=\Bx_i-\Bx_j$, $\Bf_j=\Bx_j-\Bx_i$ and $\Bf_k=0$ whenever $k\not=i$ and $k\not=j$, belongs to
	${\cal A}^*_{\BX}$ (indeed a simple wire linking $\Bx_i$ to $\Bx_j$ is an admissible web for this particular loading). Hence $\BU$ satisfies $(\Bu_i-\Bu_j)\cdot(\Bx_i-\Bx_j)=0$ for any pair $(i,j)$. This condition applied to all pairs with $i\leq d+1$ and $j\leq d+1$ implies that $\Bu$ is a rigid motion. The same condition applied to all pairs $(i,j)$ with  $i\leq d+1$ and $j> d+1$ implies that
	$\Bu_j=\Bu(\Bx_j)$ too. Then $\BU$ is a non vanishing rigid motion and this contradicts the definition of $\CB_{\BX}$.\
\end{proof}
Let us now prove that all loadings in the interior of   ${\cal A}^*_{\BX}$ are unstuck. More precisely,
\begin{lemma}\label{lemnonstuck}
	Let $\BF$ be in the interior of ${\cal A}^*_{\BX}$. Then, for $\e>0$ small enough, $\BF$ also belongs to the interior of  ${\cal A}^*_{\BX-\e \BF}$.
\end{lemma}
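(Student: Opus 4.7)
My plan is to exploit the explicit finite-generation $\CA^*_\BY = \mathrm{cone}\{\BF^{(i,j)}_\BY\}_{1 \le i < j \le N}$ established in the sufficiency direction of Theorem \ref{mainthm1}, combined with a linear perturbation of the coefficients. The first step is to upgrade the interior hypothesis into a strictly positive conical decomposition of $\BF$. Setting $\BS := \sum_{i<j} \BF^{(i,j)}_\BX \in \CA^*_\BX$, since $\BF \in \mathrm{int}(\CA^*_\BX)$ the loading $\BF - \mu \BS$ is still admissible for some $\mu > 0$; applying Theorem \ref{mainthm1} to $\BF - \mu \BS$ and then adding $\mu \BS$ back yields $\BF = \sum_{i<j} \lambda^0_{ij}\, \BF^{(i,j)}_\BX$ with $\lambda^0_{ij} \ge \mu > 0$ for every pair.

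Next, a direct computation from \eqref{RT4} gives the identity $\BF^{(i,j)}_{\BX - \e\BF} = \BF^{(i,j)}_\BX - \e\, \BF^{(i,j)}_\BF$, so
\[
\sum_{i<j} \lambda^0_{ij}\, \BF^{(i,j)}_{\BX - \e\BF} \;=\; \BF - \e \BK, \qquad \BK := \sum_{i<j} \lambda^0_{ij}\, \BF^{(i,j)}_\BF.
\]
A one-line check shows that $\BF \in \CB_\BX$ implies $\BF \in \CB_{\BX-\e\BF}$, so since the left-hand side above sits in $\CB_{\BX-\e\BF}$ by construction, so does $\BK$. To close the residual $\e\BK$, I would use that by Lemma \ref{nonempty} the generators $\{\BF^{(i,j)}_\BX\}$ span $\CB_\BX$; hence the linear map $\Phi_\BY : \R^{\binom{N}{2}} \to \CB_\BY$, $(\alpha_{ij}) \mapsto \sum_{i<j} \alpha_{ij}\, \BF^{(i,j)}_\BY$, is surjective at $\BY = \BX$. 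Because surjectivity of a linear map is an open condition and $\dim \CB_\BY$ is locally constant around $\BX$, the map $\Phi_{\BX-\e\BF}$ stays surjective for small $\e$, and its minimum-norm preimage $\alpha(\e)$ of $\e\BK$ satisfies $\|\alpha(\e)\| = O(\e)$.

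Setting $\mu_{ij}(\e) := \lambda^0_{ij} + \alpha_{ij}(\e)$ then produces $\BF = \sum_{i<j} \mu_{ij}(\e)\, \BF^{(i,j)}_{\BX-\e\BF}$ with all $\mu_{ij}(\e) > 0$ as soon as $\e$ is small enough that $|\alpha_{ij}(\e)| < \lambda^0_{ij}$, proving $\BF \in \CA^*_{\BX-\e\BF}$. The very same construction applied to $\BF + \BH$ for $\BH$ in a sufficiently small neighborhood of $0$ in $\CB_{\BX-\e\BF}$ yields coefficients that, by joint continuity in $(\e,\BH)$, remain strictly positive, thereby placing an entire neighborhood of $\BF$ inside $\CA^*_{\BX-\e\BF}$ and giving $\BF \in \mathrm{int}(\CA^*_{\BX-\e\BF})$. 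I expect the main difficulty to be justifying that the target space $\CB_\BY$ varies nicely enough with $\BY$ for $\Phi_\BY$ to remain surjective and its minimum-norm preimage to depend continuously on $\BY$; this reduces to tracking the linear-in-$\BY$ matrices defining $\CB_\BY$ and $\Phi_\BY$, whose ranks are locally constant near $\BX$ whenever the $\Bx_i$ span $\R^d$, an assumption we may make thanks to Theorem \ref{Thsupport}.
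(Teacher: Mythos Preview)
Your argument is correct but follows a genuinely different route from the paper. The paper works on the \emph{dual} side: it takes an arbitrary $\BU\in\CA_{\BX-\e\BF}$ with $\|\BU\|=1$, observes that the shifted vector $\BW=\BU+\e\gamma\BX$ (with $\gamma=4\|\BF\|/\delta^2$, $\delta$ the minimal pairwise distance) satisfies the $\CA_\BX$ inequalities, applies the interior hypothesis in the form $\BF\cdot\BV\ge\alpha\|\BV\|$ for $\BV\in\CA_\BX$, and deduces $\BF\cdot\BU\ge\alpha/2$ once $\e$ is below an explicit threshold. By contrast you work on the \emph{primal} side, perturbing a strictly positive conical decomposition $\BF=\sum\lambda^0_{ij}\BF^{(i,j)}_\BX$ into one in the generators $\BF^{(i,j)}_{\BX-\e\BF}$ via the linear identity $\BF^{(i,j)}_{\BX-\e\BF}=\BF^{(i,j)}_\BX-\e\BF^{(i,j)}_\BF$ and a pseudoinverse correction. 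What the paper's approach buys is brevity and an explicit bound on $\e$, with no need to track the moving target space $\CB_{\BX-\e\BF}$. What your approach buys is constructivity: you actually produce the tensions $\mu_{ij}(\e)>0$ in the complete web on $\BX-\e\BF$, and the interior conclusion drops out of the same surjectivity rather than requiring a separate uniform-positivity estimate. Your handling of the ``moving subspace'' issue via constant rank of $\Phi_\BY$ near $\BX$ (invoking Theorem~\ref{Thsupport} to assume the $\Bx_i$ span $\R^d$, so that $\dim\CR_\BY$ and hence $\dim\CB_\BY$ are locally constant) is the right way to close that point; the Moore--Penrose inverse is then continuous on this constant-rank stratum, giving the $O(\e)$ bound you need.
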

\begin{proof} We first remark that, for any $\e\in\R$, $\BF$ belongs to $\CB_{\BX-\e \BF}$. Indeed, for any antisymmetric matrix $\BA$ we have $\sum_{i=1}^N \Bf_i\cdot (\BA (\Bx_i-\e \Bf_i))=-\e \sum_{i=1}^N \Bf_i\cdot (\BA \Bf_i)= 0$.
	
	\smallskip
	Let now $\BU$ be any vector in ${\cal A}_{\BX-\e \BF}$ with $\|\BU\|=1$. For any pair $(i,j)$, it fulfills
	\beq (\Bu_i-\Bu_j)\cdot ((\Bx_i-\e \Bf_i)-(\Bx_j-\e \Bf_j))\geq 0, \eeq{NS.1}
	which implies
	\beq (\Bu_i-\Bu_j)\cdot (\Bx_i-\Bx_j)\geq \e (\Bu_i-\Bu_j)\cdot (\Bf_i-\Bf_j)\geq -4 \e  \|\BF\|. \eeq{NS.2}
	Let $\delta:=\min_{i\not =j} |\Bx_i-\Bx_j|$ be the smallest distance between the points $\Bx_i$ and $\gamma:=\frac {4 \|\BF\|}{\delta^2}$. The vector $\BW:=\BU+\e\gamma \BX$ satisfies
	\beq (\Bw^i-\Bw^j)\cdot (\Bx_i-\Bx_j)= (\Bu_i-\Bu_j)\cdot (\Bx_i-\Bx_j)+\e \gamma (\Bx_i-\Bx_j)\cdot (\Bx_i-\Bx_j) \geq 0. \eeq{NS.3}
	Therefore its projection $\overline \BW$ on $\CB_{\BX}$, which satisfies the same inequalities, belongs to ${\cal A}_{\BX}$.
	
	The cone ${\cal A}^*_{\BX}$ is the set of all $\BF$ satisfying $\forall \BU\in {\cal A}_{\BX}$, $\BF\cdot\BU  \geq 0$. As $\BF$ belongs to the interior  of ${\cal A}^*_{\BX}$, the function $\BV\to \BF\cdot\BV$ must be strictly positive on the compact intersection of ${\cal A}_{\BX}$ with the unit sphere. Let us call $\alpha>0$ its minimum. By homogeneity, we have for any $\BV$ in  ${\cal A}_{\BX}$, $\BF\cdot\BV \geq  \alpha \|\BV\|$.
	When applied to $\overline \BW$, this inequality gives
	\beq \BF\cdot\BW= \BF\cdot\BW = \BF\cdot\BU +\e \gamma \BF\cdot\BX \geq  \alpha \|\overline \BW\|=\alpha\|\overline{\BU+\e\gamma \BX}\|\geq \alpha( 1-\e\gamma \|\BX\|). \eeq{NS.4}
	Hence we have
	\beq  \BF\cdot\BU\geq \alpha( 1-\e\gamma \|\BX\|)-\e \gamma \BF\cdot\BX\geq  \alpha -\e\gamma \|\BX\| (\alpha+\|\BF\|). \eeq{NS.5}
	For $\e$ smaller than $\frac {\alpha}{2 \gamma\|X\| (\alpha+\|\BF\|)}$, we get  $\BF\cdot\BU\geq \frac \alpha 2  \|\BU\|$. This inequality, valid for any  $\BU\in{\cal A}_{\BX-\e \BF}$ with $\|\BU\|=1$, remains true on the whole cone ${\cal A}_{\BX-\e \BF}$ by homogeneity. So the lemma is proven.
\end{proof}

\begin{lemma}\label{lemconnected}
	
	If $\BF$ is in the interior of ${\cal A}^*_{\BX}$ then there exists a connected web supporting $\BF$ at $\BX$ with  a strictly positive stress state.
\end{lemma}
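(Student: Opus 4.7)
The plan is to exploit the interior condition on $\BF$ to add a small positive multiple of a ``complete graph'' stress state to any existing admissible representation of $\BF$. This will make every pair--wise wire carry strictly positive tension, producing a connected web.

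First, I would consider the reference loading
\beq \BF_0\ :=\ \sum_{1\le i<j\le N}\BF^{(i,j)}, \eeq{connplan1}
where $\BF^{(i,j)}$ is defined in \eqref{RT4}. A direct check (which I would include) shows that each $\BF^{(i,j)}$ lies in $\CB_\BX$: the two nonzero entries $\Bx_j-\Bx_i$ and $\Bx_i-\Bx_j$ sum to zero, and
$(\Bx_j-\Bx_i)\otimes\Bx_i+(\Bx_i-\Bx_j)\otimes\Bx_j$ equals $-(\Bx_i-\Bx_j)\otimes(\Bx_i-\Bx_j)$, which is symmetric, so $\BF^{(i,j)}$ is self--equilibrated and torque--free. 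Hence $\BF_0\in\CB_\BX$, and clearly $\BF_0\in\CA^*_\BX$ since it is a non negative combination of the generators $\BF^{(i,j)}$ appearing in the Farkas decomposition \eqref{RT4a}.

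Next, because $\BF$ lies in the interior of $\CA^*_\BX$ relative to $\CB_\BX$ (whose non emptiness is ensured by Lemma~\ref{nonempty}), there exists $t>0$ small enough so that
\beq \BF - t\BF_0\ \in\ \CA^*_\BX. \eeq{connplan2}
Applying Theorem~\ref{mainthm1} (in the form of the Farkas decomposition \eqref{RT4a}) to the loading $\BF-t\BF_0$, I obtain non negative coefficients $\mu_{i,j}\ge 0$ such that
\beq \BF - t\BF_0 \ =\ \sum_{1\le i<j\le N}\mu_{i,j}\, \BF^{(i,j)}. \eeq{connplan3}
Adding back $t\BF_0$ yields
\beq \BF\ =\ \sum_{1\le i<j\le N} (t+\mu_{i,j})\, \BF^{(i,j)}, \eeq{connplan4}
and by construction every coefficient $t+\mu_{i,j}$ is strictly positive.

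Finally, by the construction used in the proof of Theorem~\ref{mainthm1} (see \eqref{RT2}), the stress measure
\beq \ssigma\ =\ \sum_{1\le i<j\le N}(t+\mu_{i,j})\,\frac{(\Bx_i-\Bx_j)\otimes(\Bx_i-\Bx_j)}{\|\Bx_i-\Bx_j\|}\, \CH^1\!\restrict{[\Bx_i,\Bx_j]} \eeq{connplan5}
is admissible for $\BF$, with strictly positive tension $t+\mu_{i,j}>0$ on every segment $[\Bx_i,\Bx_j]$. The underlying web is the complete graph on the terminal nodes, which is obviously connected, so the lemma follows. The only delicate point is verifying that one can indeed take $t>0$ with $\BF-t\BF_0\in\CA^*_\BX$, and this is exactly what the interiority assumption in $\CB_\BX$ guarantees once one has checked $\BF_0\in\CB_\BX$; I do not foresee any other obstacle.
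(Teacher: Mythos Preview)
Your argument is correct and, in fact, cleaner than the paper's. The paper does not build the complete graph directly; instead it argues that a loading $\BF$ can fail to be supported by a connected web only if some strict subset $I\subset\{1,\dots,N\}$ satisfies $\sum_{i\in I}\Bf_i=0$, observes that the set $\mathcal D$ of such loadings is a finite union of codimension-$d$ subspaces of $\CB_\BX$, and then---when $\BF\in\mathcal D$---perturbs to $\BF\pm\BG\in\CA_\BX^*\setminus\mathcal D$, takes connected webs for each, and averages the two stress measures. Your route sidesteps all of this: by subtracting a small multiple of the ``complete graph'' loading $\BF_0=\sum_{i<j}\BF^{(i,j)}$ (which you correctly check lies in $\CB_\BX$) and reapplying the Farkas decomposition \eqref{RT4a}, you force every pairwise coefficient to be strictly positive, so the web is simply the complete graph on $\BX$. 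This is more constructive, avoids the case distinction on $\mathcal D$, and makes the strict positivity of the stress state immediate rather than implicit. The paper's approach, on the other hand, isolates the structural reason disconnectedness can occur (balanced sub-loadings), which is conceptually informative even if not needed for the lemma as stated.
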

\begin{proof}
	
	We first check that there exists a web $W$ supporting $\BF$ at $\BX=(\Bx_1,\dots,\Bx_N)$ such that all wires with non vanishing tension make a connected set.
	The set $\mathcal D$ of loadings $\BF$ for which there exists a set $I$ strictly included in $\{1,\dots,N\}$ such that  $\sum_{i\in I} \Bf_i=\B0$ is a union of subspaces (of codimension $d$) of the space $\mathcal B_\BX$ of balanced loading.
	We want to avoid the case where $\BF$ is supported by two or more disconnected webs and clearly this may only happen if $\BF\in {\mathcal D}$. In that case we use the following trick: as the interior of $\mathcal D$ is empty we can find $\BG\in {\mathcal B}_\BX$ such that
	$\BF+\BG$ and $\BF-\BG$ belong to $\CA_{\BX}^*\setminus {\mathcal D}$. Let ${W}^+$ and ${W}^-$ be two webs supporting respectively   $\BF+\BG$ and $\BF-\BG$ and let $\ssigma^+$ and $\ssigma^-$ be the associated stress measures. The measure  $\ssigma:=(\ssigma^++\ssigma^-)/2$ satisfies
	$$\Div\ssigma+\CF =\Div\ssigma^+/2+\Div\ssigma^-/2+\CF= -(\CF + \CG)/2 -(\CF - \CG)/2 +\CF= \B0.$$
	Its support is the union of the supports of $\ssigma^+$ and $\ssigma^-$ which both are connected sets containing all the terminal nodes where the applied forces are non vanishing.
	Thus we get a finite connected web  $W$  with a strictly positive stress state $\sigma$  that supports the loading $\BF=(\Bf_1,\dots,\Bf_N)$ at points $\BX=(\Bx_1,\dots,\Bx_N)$. 
\end{proof}

\begin{corollary}\label{cornonstuck}
	If $\BF$ is in the interior of ${\cal A}^*_{\BX}$ then there exists is a connected web supporting $\BF$ at $\BX$ with  a strictly positive stress state and only one wire joining each terminal node.
\end{corollary}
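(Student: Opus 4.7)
The plan is to combine Lemmas \ref{lemnonstuck} and \ref{lemconnected} with a simple ``stub extension'' at each terminal. Because $\BF$ lies in the interior of $\CA^*_{\BX}$, Lemma \ref{lemnonstuck} provides an $\e>0$ so small that $\BF$ still lies in the interior of $\CA^*_{\BX'}$, where $\BX':=\BX-\e\BF$, i.e.\ $\Bx'_i:=\Bx_i-\e\Bf_i$. Applying Lemma \ref{lemconnected} to $\BF$ at $\BX'$ produces a connected finite web $W'$, carrying a strictly positive stress state $\ssigma'$ and supporting $\BF$ at $\BX'$.

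From $W'$ I build the desired web $W$ by adjoining, for each $i$, the straight segment $[\Bx'_i,\Bx_i]$ with tension $T_i:=\|\Bf_i\|$; the points $\Bx'_i$ are then reinterpreted as internal nodes of $W$, while the $\Bx_i$ remain the terminals. Equilibrium is immediate. At each terminal $\Bx_i$ the unique attached wire contributes $T_i(\Bx'_i-\Bx_i)/\|\Bx'_i-\Bx_i\|=-\Bf_i$, which balances the applied $\Bf_i$. At each new internal node $\Bx'_i$, the wires of $W'$ previously summed to $-\Bf_i$ (to balance the external $\Bf_i$ formerly applied there), and the new stub contributes exactly $+\Bf_i$, so the node is now in equilibrium with no external force. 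Connectedness of $W$ is inherited from $W'$ via the stubs, strict positivity of the stress is inherited from $\ssigma'$ together with $T_i>0$, and the single-wire condition at each terminal holds by construction.

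The main obstacle is a minor genericity issue rather than a conceptual one: each stub $[\Bx'_i,\Bx_i]$ must not overlap a wire of $W'$, pass through another node, or cross another stub. Shrinking $\e$ further eliminates stub--stub and stub--node collisions, since the distinct points $\Bx_j$ and the nodes of $W'\setminus\{\Bx'_i\}$ sit at positive distance from $\Bx_i$. The only delicate case is that a wire of $W'$ might leave $\Bx'_i$ in the direction $\Bf_i$; this is avoided by a small perturbation of the construction of $W'$, using the freedom afforded by the fact that $\BF$ is a full-dimensional interior point of $\CA^*_{\BX'}$ (so supporting webs form a robust family). Finally, any index $i$ with $\Bf_i=\mathbf{0}$ can either be dropped from the terminal list or regarded as an internal node of $W$, in which case the single-wire condition is vacuous there. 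With these provisos the corollary follows at once from the two preceding lemmas.
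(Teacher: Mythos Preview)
Your proof is correct and follows essentially the same route as the paper: apply Lemma \ref{lemnonstuck} to pull the terminals back to $\BX-\e\BF$, apply Lemma \ref{lemconnected} there, and then append the short ``stub'' wires $[\Bx_i-\e\Bf_i,\Bx_i]$ with tension $\|\Bf_i\|$. The paper's argument is the same three lines, only omitting the genericity discussion (stub overlaps, the $\Bf_i=\mathbf 0$ case) that you spelled out; your extra care there is welcome but not a different approach.
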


\begin{proof}
	Lemma \ref{lemnonstuck} provides an $\e>0$ small enough for $\BF$ to belong to the interior of  ${\cal A}^*_{\BX-\e \BF}$. Lemma \ref{lemconnected} then  provides a connected web $W$ supporting $\BF$ at $\BX-\e \BF$ with a strictly positive stress state $\sigma$. Adding to $W$, for those $i\in\{1,\dots,N\}$ such that $\Bf_i\not=0$, the wires $[\Bx_i,\Bx_i+\e \Bf_i]$ and fixing the tension to $\|\Bf_i\|$ in each of these wires, makes a new web supporting $\BF$ at $\BX$ with a strictly positive stress state. Clearly, as $W$ is connected,  the new one is connected too. 
\end{proof}

We can finally state the theorem regarding the existence of a uniloadable web for an unstuck loading.
\begin{theorem}\label{mainthm2}{\bf Existence of uniloadable webs}
	\newline
	For any $\BF$ in the interior of the admissible loading cone $\CA_{\BX}^*$, there exists a finite web $W$ such that  $\CC^W_{\BX}=\{\Gl\BF: \  \Gl\geq 0\}$.
\end{theorem}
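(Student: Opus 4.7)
The plan is to start from a web produced by Corollary \ref{cornonstuck} and to iteratively delete wires until the cone of supported loadings collapses to a single ray. By Corollary \ref{cornonstuck}, I may take a finite connected web $W_0$ supporting $\BF$ at $\BX$ with a strictly positive equilibrium stress $\sigma_0$ and exactly one wire incident to each terminal node, directed along $\Bf_i$. Write $\mathcal{S}(W)$ for the (sign-unconstrained) real vector space of tension assignments on $W$ satisfying force balance at every internal node, and $L_W\colon \mathcal{S}(W)\to(\R^d)^N$ for the linear map returning the induced terminal loading. Because each terminal of $W_0$ carries a single wire aligned with $\Bf_i$, every non-negative stress yields a loading whose $i$th force is a non-negative multiple of $\Bf_i$; combined with strict positivity of $\sigma_0$ (which makes the non-negativity constraint non-binding around $\sigma_0$), this reduces the goal $\CC^W_\BX=\R_+\BF$ to showing $\dim L_W(\mathcal{S}(W))=1$.

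The key technical device is a pruning lemma: for any non-zero $\tau\in\mathcal{S}(W)$, since $\sigma_0>0$ strictly, the segment $\sigma_0+t\tau$ stays non-negative on a maximal interval $[t_-,t_+]$ and at one of the endpoints $t^*$ the stress vanishes on some wire $e^*$. Deleting $e^*$ produces a sub-web $W^-$ with a strictly positive stress state still supporting the loading $\BF+t^*L_W(\tau)$, and it satisfies $\dim \mathcal{S}(W^-)=\dim \mathcal{S}(W)-1$. I then organize the reduction in two phases. In Phase~1, as long as $\ker L_W\ne 0$, I choose $\tau$ to be a non-zero self-stress; the supported loading remains exactly $\BF$ while the number of wires strictly decreases. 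Phase~1 ends at some web $W'$ supporting $\BF$ on which $L_{W'}$ is injective, so $\dim \mathcal{S}(W')=\dim L_{W'}(\mathcal{S}(W'))$.

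In Phase~2, if $\dim \mathcal{S}(W')\ge 2$, then $W'$ supports a family of loadings of dimension at least two; the hypothesis that $\BF$ lies in the \emph{interior} of $\CA^*_{\BX}$ is used here. Because the positive stress supporting $\BF$ lies strictly in the interior of the positive cone of $\mathcal{S}(W')$, I may pick any $\tau$ not proportional to it and apply the pruning lemma to obtain a further sub-web supporting a loading $\BF+t^*L_{W'}(\tau)$ which, by interiority of $\BF$ and continuity of the admissible-loading cone under small geometric and topological perturbations of the web, is still an interior admissible loading of the new web; reapplying Corollary \ref{cornonstuck} (if necessary, on the reduced web) and then Phase~1 reduces $\dim L$ by one. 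Iterating Phases~1 and~2 therefore drives $\dim\mathcal{S}(W)$ strictly down at each step. Since the number of wires is a non-negative integer that strictly decreases, the procedure terminates in finitely many steps at a web $W$ with $\dim \mathcal{S}(W)=1$ that still supports $\BF$. Any admissible stress on $W$ is then a non-negative multiple of the stress supporting $\BF$, giving $\CC^W_\BX=\R_+\BF$.

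The principal obstacle is Phase~2: the pruning along a non-self-stress direction a priori moves the supported loading away from $\BF$, and one must verify that the altered loading is still supported by the reduced web with $\BF$ remaining in its cone of supported loadings. The interior hypothesis is essential here --- it guarantees both the existence of enough room in $\CA^*_{\BX}$ to choose small enough $t^*$ and the fact that the reduced web, having a strictly positive stress state, still admits a non-trivial open cone of supported loadings containing $\BF$, so that the reduction machinery of Corollary \ref{cornonstuck} can be reapplied. Once this termination argument is in place, uniloadability of the final web follows immediately from the one-dimensionality of its stress space together with the preservation of $\BF$ throughout the iteration.
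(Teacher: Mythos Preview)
Your Phase~1 is sound and is a clean linear-algebraic device: as long as $\ker L_W\neq 0$, pushing $\sigma_0$ along a self-stress $\tau$ until some wire tension vanishes removes a wire while keeping a non-negative (generically, strictly positive on the surviving wires) stress state that still produces exactly $\BF$; since $(\sigma_0)_{e^*}>0$ forces $\tau_{e^*}\neq 0$, the kernel dimension strictly drops. This part could replace the paper's approach to eliminating redundancy.

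The gap is Phase~2. Once Phase~1 terminates, $L_{W'}$ is injective, so the \emph{only} stress in $\mathcal S(W')$ producing $\BF$ is $\sigma_0$, and $\sigma_0$ is strictly positive on every wire. Removing any wire $e^*$ yields $\mathcal S(W'')=\mathcal S(W')\cap\{e^*=0\}$, which does not contain $\sigma_0$; by injectivity of $L_{W'}$, $\BF\notin L_{W'}(\mathcal S(W''))$. Hence no further pruning can preserve support of $\BF$. Your appeal to interiority of $\BF$ in $\CA_\BX^*$ and to ``continuity of the admissible-loading cone'' does not address this: interiority gives you room in the ambient cone $\CA_\BX^*$, but not in the one-dimensional-smaller image $L_{W'}(\mathcal S(W''))$, which is a fixed linear subspace missing $\BF$. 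Reapplying Corollary~\ref{cornonstuck} to the shifted loading $\BF+t^*L_{W'}(\tau)$ produces a web for that loading, not for $\BF$, and you never explain how to return to $\BF$ without undoing the dimension reduction. As written, the iteration in Phase~2 either fails to terminate at a web supporting $\BF$, or the termination claim is unjustified.

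The paper sidesteps this obstruction by going in the opposite direction: rather than deleting wires, it \emph{locally refines} the web so that every internal node has coordination at most three (in $\R^2$) or four (in $\R^3$), using the constructions of Section~4.1 and the five-wires Lemma~\ref{5wires}. These replacements preserve the single-wire-per-terminal property from Corollary~\ref{cornonstuck} and keep $\BF$ supported with strictly positive tensions throughout. Uniloadability then follows from a propagation argument: the single terminal wire at $\Bx_1$ fixes the direction of any supported $\widetilde\Bf_1$, hence $\widetilde\Bf_1=\lambda\Bf_1$; balance of forces at each low-coordination internal node determines all incident tensions from one, and connectedness carries $\lambda$ through the whole web. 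If you want to salvage your linear-algebraic route, you would need an argument that after Phase~1 one automatically has $\dim\mathcal S(W')=1$ (which is false in general), or a different Phase~2 that modifies the web without leaving the fiber $L_W^{-1}(\BF)$; the paper's node-splitting constructions are precisely such modifications.
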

\begin{proof}
	Corollary \ref{cornonstuck} states that there exists a finite connected web  $W$  with a strictly positive stress state $\sigma$  that supports the loading $\BF=(\Bf_1,\dots,\Bf_N)$ at points $\BX=(\Bx_1,\dots,\Bx_N)$ and  such that only one wire is attached to each terminal node.
	This web is then modified (according to Lemma \ref{lemnonstuck}) into a new web $\widetilde W$ having the extra property that all the internal nodes have in two  dimensions at most
	three coplanar wires meeting it, and in three dimensions at most four noncoplanar wires, or three coplanar wires, meeting it.
	This ensures that  $\widetilde W$ supports only the loadings $\Gl\BF$ for $\Gl\geq 0$. Indeed, as  $\BF$ is an admissible loading  for $\widetilde W$ at points $\BX$, the unique wire attached to $\Bx_1$ has direction $\Bf_1$. Let $\widetilde \BF=(\widetilde\Bf_1,\dots,\widetilde\Bf_N)$ be another admissible loading  for $\widetilde W$ at $\BX$. Balance of forces at $\Bx_1$ imposes $\widetilde\Bf_1=\Gl \Bf_1$. As the wire attached to $\Bx_1$ is under tension for both loadings, we have $\Gl\ge 0$. Now at each node of the web, once the (positive) tension in one of the joining wires is fixed, the balance of forces fixes the tension in all other joining wires. As the web is connected, from node to node, the tensions of all wires are fixed :  $\widetilde \BF$ is uniquely determined and clearly $\widetilde \BF=\Gl \BF$.
\end{proof}

\subsection{Possible loading cones}

Here we seek answer to the question: given a convex cone $\CC\subset\CA_{\BX}^*$ can one find a web $W$ such that $\CC^W_{\BX}=\CC$?

To proceed, as sketched in Figure 4 of \cite{Milton:2017:SFI},
one approximates the convex cone $\CC$ by a cone, that we will denote with  $\CC^{W_j}_{\BX}$, having a finite number $j$ of extreme rays $\BF^{(j)m}$, $m=1,2,\ldots,j$, each
strictly in the interior of $\CC$ and hence strictly in the interior of the admissible loading cone $\CA_{\BX}^*$. As we are free to make arbitrary small perturbations of the extreme rays  $\BF^{(j)m}$, we can assume that these are chosen so that at each terminal $i$, $\Bf^{(j)m}_i$ is not collinear with $\Bf^{(j)\ell}_i$ for any $m,\ell$ with $\ell\ne m$.
Associated with $\BF^{(j)m}$ is then a uniloadable web $W^{(j)m}$ supporting, and only supporting, at the terminals $\BX$,
the loadings $\Gl\BF^{(j)m}$, $\Gl\geq 0$. By adjusting the positions of the interior nodes of the different uniloadable webs $W^{(j)m}$ now indexed by $j=1,2,\ldots,m$
we can ensure that the webs $W^{(j)m}$ do not have overlapping interior nodes, nor overlapping collinear wires.
By superimposing these uniloadable webs, for  $m=1,2,\ldots,j$ one obtains the web $W_j$ having the desired loading cone. It may happen that some wire pairs cross when we superimpose the webs,
generating an additional interior node at the crossing point. This is still okay, as balance of forces at the crossing point ensures that the tension in the wire remains the same on opposite sides of the crossing point. If more than two wires cross at the same point, then we can perturb the uniloadable webs to avoid this.

Taking the limit $j\to\infty$ allows us to approximate $\CC$ arbitrarily closely, so that $\CC^{W_j}_{\BX}\to\CC$.
%

\section{Conclusions}

{In this paper we provide full answer to the question as to whether, given a set of forces applied to specific points, called terminal points, there exists a web that supports such forces with all the wires under tension. Specifically, we provide a necessary and sufficient condition on the loading forces which guarantees the existence of such a web, see Theorem \ref{mainthm1}. Such a condition corresponds to a finite dimensional linear programming problem: if this has solution, then a web exists which is formed by the wires connecting pairwise the terminal points.  Conversely, any web under tension supporting the given loading can be replaced by the web provided by the linear programming problem.
	
	The conditions related to the linear programming problem are inequalities expressed in terms of the displacements of the terminal points: they form a cone in the displacement space and the edges of the cone correspond to those displacements that satisfy the conditions as equalities. In case the terminal points form the vertices of a convex polygon, these extreme edges correspond precisely (up to an infinitesimal rigid body motion) to clam-shell movements and do not include any other movements. By clam-shell movements we denote a displacement field that arises when one breaks the convex polygon connecting the terminal points into two non-overlapping subpolygons connected at one vertex and fixes one subpolygon and infinitesimally rotates the other one, so the clam opens
	slightly. In the case there is at least one terminal point that
	lies inside the convex hull of the terminal points, the extreme rays of the cone of admissible displacements correspond to types of
	displacements that are not simply clam-shell movements. 
	
	In practical situations one would like to have uniloadable webs, that is webs that support only one loading and all positive multiples of it: such webs allow one to channel stresses in desired ways and the superposition of them allows one to get a desired convex loading cone. To construct a uniloadable web in
	two-dimensions, one has to replace each closed minimal loop, that is any polygon formed by the intersection of
	wires that cannot be divided into subpolygons, by an open web (not containing any closed loop). This is always possible if the terminal points are positioned at
	the vertices of a convex polygon. If that is not the case, then there
	will still be minimal closed loops in a number equal to that of the terminal points which lie inside the convex hull of the
	terminal points.

	In general, to construct uniloadable webs, one has to reduce the number of wires meeting at either a terminal point or at an internal node. The first step is to modify the web so that only one wire is attached to each terminal point. We proved that this is possible only when the given loading lies inside the cone formed by the inequality conditions associated with the dual linear programming problem. If, instead, the loading corresponds to one of the extreme rays of such a cone, then we have that for some webs this modification is impossible and we say that such webs have stuck loadings. We provided two examples of webs of this type in two dimensions, see Examples \ref{Example3} and \ref{Example4}. Since in two dimensions we know that a web with terminal points that are vertices of a convex polygon can always be replaced by an open web, we have that such webs are never completely stuck: indeed, in both Example \ref{Example3} and \ref{Example4} the terminal points are not vertices of a convex polygon. This does not hold in the three-dimensional case for which one can find a web with stuck loadings which have the terminal points forming a convex polyhedron (see Figure \fig{cube}). On the other hand, if the loading belongs to the interior of the cone formed by the inequalities regarding the loadings in the dual linear programming problem, then we provide a general procedure to reduce the number of wires meeting at each internal node where initially 5 or more wires meet.
	
\vskip6pt

\enlargethispage{20pt}

\section*{Acknowledgments}
Ornella Mattei and Graeme Milton are grateful to the National Science Foundation for support through {DMS-1814854}.
Graeme Milton is grateful to the Universit{\'e} de Toulon for hosting his two week visit there in September 2017 where this work was initiated.


\end{document}